\newcommand{\abs}[1]{\vert #1 \vert}
\newcommand{\gen}[1]{\langle #1 \rangle}
\newcommand{\ket}[1]{\vert #1 \rangle}
\newcommand{\Int}{\mathbb Z}
\newcommand{\bound}[2]{\in\{ #1,\ldots,#2\}}
\newcommand{\Sp}{\mathscr{S}}
\newcommand{\poly}{\textrm{poly}}
\newcommand{\Dd}{\mathscr{D}}
\newcommand{\field}{\mathbb{F}}
\newcommand{\fieldK}{\mathbb{K}}
\newcommand{\st}{\:\vert\:}
\newcommand{\triv}{\{e\}}
\long\def\symbolfootnote[#1]#2{\begingroup%
\def\thefootnote{\fnsymbol{footnote}}\footnote[#1]{#2}\endgroup}
\newtheorem{theorem}{Theorem}[section]
\newtheorem{proposition}{Proposition}[section]
\newtheorem{definition}{Definition}[section]
\newtheorem{lemma}{Lemma}[section]
\begin{document}
\begin{center} 

{\LARGE \bf An Efficient Quantum Algorithm for some Instances\\ of the Group Isomorphism Problem}\vspace{5mm}

{\Large Fran{\c c}ois Le Gall \symbolfootnote[0]{This work was done while the author was a researcher at Kyoto University, affiliated with the ERATO-SORST Quantum Computation and Information Project, Japan Science and Technology Agency.}}\vspace{3mm}

{\it Department of Computer Science}\\
{\it Graduate School of Information Science and Technology}\\
{\it The University of Tokyo}

email: legall@is.s.u-tokyo.ac.jp \vspace{1mm}

\setlength{\baselineskip}{10pt}
      \begin{quotation}
\noindent{\bf Abstract.}\hbox to 0.5\parindent{}
 In this paper we consider the problem of testing whether two finite groups are isomorphic.
Whereas the case where both groups are abelian is well understood and can be solved 
efficiently, very little is known about the complexity of isomorphism testing for nonabelian 
groups. Le Gall has  constructed an efficient classical algorithm for a class of groups 
corresponding to one of the most natural ways of constructing nonabelian groups from abelian groups: the groups that are 
extensions of an abelian group $A$ by a cyclic group $\Int_m$ with the order of $A$ coprime with $m$.
More precisely, the running time of that algorithm is almost linear in the order of the input groups.
In this paper we present a \emph{quantum} algorithm solving the same problem in time polynomial 
in the \emph{logarithm} of the order of the input groups. 
This algorithm works in the black-box setting and is 
the first quantum algorithm solving instances of the nonabelian group isomorphism problem exponentially faster than the best known classical algorithms.
\end{quotation}
\setlength{\baselineskip}{11pt}
      \begin{quotation}
\end{quotation}
\end{center} 
\vspace{-10mm}

\section{Introduction}
\paragraph{Background}

Testing group isomorphism (the problem asking to decide, for two given finite groups
$G$ and $H$,  whether there exists an isomorphism between $G$ and
$H$)
is a fundamental problem in computational group theory but little is known about its complexity. 
It is known that the group isomorphism problem (for groups given by their multiplication tables) reduces to the graph isomorphism problem \cite{Kobler+93}, 
and thus the group isomorphism problem is in the complexity class $NP\cap coAM$  (since the graph isomorphism problem is in this class  \cite{BabaiSTOC85}).
Miller \cite{MillerSTOC78} has developed a general technique to check group isomorphism in time $O(n^{\log n+O(1)})$, where $n$ denotes the size of the input groups and Lipton, 
Snyder and Zalcstein \cite{Lipton+76} have given an algorithm working in $O(\log^2{n})$ space. However, no polynomial time algorithm is known for the general case of this problem.

Another line of research is the design of algorithms solving the group isomorphism problem for particular classes of groups.
For abelian groups polynomial-time algorithms follow directly from efficient algorithms for the computation of the Smith normal form of integer 
matrices  \cite{Chou+SICOMP82, Kannan+SICOMP79}. More efficient methods have been given by Vikas \cite{Vikas96} and 
Kavitha \cite{KavithaJCSS07} for abelian groups given by their multiplication tables, and fast parallel algorithms have been constructed by McKenzie and Cook \cite{McKenzie+SICOMP87}
for abelian permutation groups.
The current fastest algorithm solving the abelian group isomorphism problem for groups given as black-boxes has been 
developed by Buchmann and Schmidt \cite{Buchmann+05} and works in time $O(n^{1/2}(\log n)^{O(1)})$.
However, as far as nonabelian groups are concerned, very little is known.
For solvable groups Arvind and Tor{\'a}n \cite{Arvind+CCC04} have shown that the group isomorphism problem is in $NP\cap coNP$ 
under certain complexity assumptions
but, until recently, the only polynomial-time algorithms testing isomorphism of nontrivial classes of nonabelian groups
were a result by Garzon and Zalcstein \cite{Garzon+JCSS91}, which holds for a very restricted class, and a 
body of works initiated by Cooperman et al.~\cite{cooperman+97} on simple groups, which will be discussed later.

Very recently, Le Gall \cite{LeGallSTACS09} proposed an efficient classical algorithm solving the group isomorphism problem over another class of 
nonabelian groups.
Since for abelian groups the group isomorphism problem can be solved efficiently, that work focused on one of the most natural 
next targets: cyclic extensions of abelian groups. 
Loosely speaking such extensions are constructed by taking an abelian group $A$ and adding one element $y$ that, in general, does not commute with the elements
in $A$. 
More formally the class of groups considered in \cite{LeGallSTACS09}, denoted by $\Sp$, was the following.
\begin{definition}
Let $G$ be a finite group. 
The group $G$ is said to be in the class $\Sp$ if there exists a normal abelian subgroup $A$ in $G$ and 
an element $y\in G$ of order coprime with $\abs{A}$ such that $G=\gen{A,y}$.
\end{definition}
\noindent In technical words $G$ is an extension of an abelian group $A$ by a  cyclic group $\Int_{m}$ with 
$gcd(\abs{A},m)=1$. 
This class of groups includes all the abelian groups and many non-abelian groups too, as discussed in details in \cite{LeGallSTACS09}.  
For example, for $A=\Int_3^4$ and $m=4$, there are exactly $9$ isomorphism classes in $\Sp$ (1 class of abelian groups and 8 classes of nonabelian groups). 
Moreover, the class $\Sp$ includes several groups that have been the target of quantum algorithms, as discussed later.
The main result in \cite{LeGallSTACS09} was the following theorem.
\begin{theorem}[\cite{LeGallSTACS09}]\label{theorem_old}
There exists a deterministic algorithm checking whether two groups $G$ and $H$ in the class $\Sp$ (given as black-box groups)
are isomorphic and, if this is the case, computing an isomorphism from $G$ to $H$.  
Its running time has for upper bound $n^{1+o(1)}$, where $n=min(\abs{G},\abs{H})$.
\end{theorem}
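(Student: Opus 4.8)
The plan is to exploit the rigid structure forced by membership in $\Sp$: up to isomorphism such a group is a semidirect product of an abelian group by a cyclic group of coprime order, so isomorphism is governed by a short list of invariants, and essentially all the difficulty is concentrated in computing and comparing one of them --- an automorphism of the abelian part, up to conjugacy.

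\textbf{Structure theory and the isomorphism criterion.}
First I would establish the normal form. Writing $m$ for the order of the distinguished element $y$, coprimality forces $\gen y\cap A=\triv$, so $G=A\gen y$ with $\abs G=\abs A\cdot m$, and conjugation by $y$ yields a homomorphism $\phi\colon\Int_m\to\mathrm{Aut}(A)$; because $A$ is abelian, the action of the subgroup $\gen y$ on $A$ is unchanged if we replace $\gen y$ by an $A$-conjugate. Next, the normal abelian Hall subgroups of $G$ are closed under products (their product is a normal Hall subgroup that is nilpotent with abelian Sylow subgroups, hence abelian), so there is a unique maximal one, $A_G\nor G$, which is therefore characteristic; moreover $G/A_G$ is cyclic of order coprime to $\abs{A_G}$, and by Schur--Zassenhaus it has a cyclic complement $\gen y$ in $G$. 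A direct computation with such complements then gives the criterion: for $G,H\in\Sp$,
\[
G\cong H\iff A_G\cong A_H,\quad m_G=m_H=:m,\quad\text{and}\quad\exists\,\alpha\in\mathrm{Isom}(A_G,A_H),\ k\in(\Int/m\Int)^{\ast}:\ \alpha\,\phi_G(1)\,\alpha^{-1}=\phi_H(1)^{k}.
\]

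\textbf{Recovering the invariants from the black box.}
Given $G$ as a black box, I would first enumerate all of $G$ by breadth-first search over the Cayley graph of the input generators and compute every element's order by repeated squaring, in $\abs G^{1+o(1)}$ group operations; this turns the black box into an explicit list and makes the remaining Hall/Sylow manipulations routine. Ranging over the $\abs G^{o(1)}$ sets $\pi$ of primes dividing $\abs G$, I would test whether the set of $\pi$-elements forms an \emph{abelian} subgroup of the expected order and take the largest $\pi$ for which it does --- the resulting subgroup is exactly $A_G$, and $m_G=\abs G/\abs{A_G}$. Decomposing $A_G$ into invariant factors and fixing a cyclic complement $\gen y$ then exhibits $\phi_G(1)$ as an explicit automorphism of $A_G$ (a matrix on generators). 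All of this is near-linear in $\abs G$.

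\textbf{Comparing the actions, and the main obstacle.}
The tests $A_G\cong A_H$ and $m_G=m_H$ are immediate once the abelian parts are in canonical form. The heart of the argument is deciding, for each admissible $k$, whether $\phi_G(1)$ and $\phi_H(1)^{k}$ are conjugate by some $\alpha\in\mathrm{Isom}(A_G,A_H)$; equivalently, whether two $\Int[x]$-module structures on $A\cong A_G\cong A_H$ are isomorphic. This is exactly where the hypothesis $\gcd(\abs A,m)=1$ is indispensable: $x^{m}-1$ is separable modulo every prime dividing $\abs A$, so working on the primary decomposition $A=\prod_p A_p$ and Hensel-lifting the factorization of $x^{m}-1$ splits the relevant group ring into a product of Galois chain rings, over which every finite module is a direct sum of cyclic modules. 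Hence each module structure is captured by a finite table of multiplicities, computable from a few rank computations over finite fields, and module isomorphism --- up to the twist $x\mapsto x^{k}$, which merely permutes the factors --- reduces to comparing these tables; the same structure theory produces an explicit $\alpha$, and thence an explicit isomorphism $G\to H$, whenever the tables match. I expect this step to be the main obstacle: conjugacy of automorphisms of an abelian group is in general as hard as difficult instances of module isomorphism, and only the coprimality hypothesis collapses it to the semisimple situation above. Some care is also needed to balance the cost of the linear algebra against, when $\abs{A_G}$ is small, simply enumerating $\mathrm{Aut}(A_G)$, so that together with the $\abs G^{o(1)}$ choices of $\pi$ and the $O(m)$ choices of $k$ the total running time stays within $n^{1+o(1)}$.
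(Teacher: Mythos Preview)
This theorem is not proved in the present paper --- it is quoted from \cite{LeGallSTACS09} as background --- so there is no ``paper's own proof'' to compare against directly. That said, the present paper does describe the architecture of the \cite{LeGallSTACS09} algorithm (standard decomposition, then Proposition~\ref{proposition_class}, then Proposition~\ref{theorem_red}, then an exhaustive search over $k\in\{1,\dots,m\}$), and your plan matches that architecture essentially point for point: your ``unique maximal normal abelian Hall subgroup'' is exactly the $A$ in a standard decomposition, your isomorphism criterion is Proposition~\ref{proposition_class}, your module-theoretic reduction (separability of $x^{m}-1$ modulo every prime dividing $\abs{A}$, hence a direct-sum-of-cyclics decomposition whose multiplicities are conjugacy invariants) is the content of Proposition~\ref{theorem_red}, and your loop over $k$ is precisely what the paper says \cite{LeGallSTACS09} does.

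The one place your plan diverges from \cite{LeGallSTACS09} is in how you locate $A_G$: you propose to enumerate $G$ in full and then test, for every set $\pi$ of primes, whether the $\pi$-elements form an abelian subgroup, whereas \cite{LeGallSTACS09} works only with the given generators and runs over the divisors of their orders. Your route is correct --- your closure-under-products argument shows the maximal such $\pi$ is well-defined, and a normal Hall $\pi$-subgroup, when it exists, is exactly the set of $\pi$-elements --- but as written the per-$\pi$ subgroup/abelianness check is underspecified and could naively cost $\Theta(n^{2})$. The fix is immediate once you observe that $A_G$ is simply the product of those Sylow $p$-subgroups of $G$ that are both normal and abelian; testing this prime by prime (count $p$-elements, compare to the $p$-part of $\abs G$, then check commutation on a small generating set) keeps each of the $O(\log n/\log\log n)$ iterations near-linear. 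With that adjustment your plan yields the stated $n^{1+o(1)}$ bound and is, in substance, the same proof.
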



\paragraph{Statement of our results}
In the present paper, we focus on \emph{quantum algorithms} solving the group isomorphism problem in the black-box setting.
Cheung and Mosca \cite{Cheung+01}
have shown how to compute the decomposition of an abelian group into a direct product of cyclic subgroups in time polynomial in the logarithm of its order on a quantum computer, and thus
how to solve the abelian group isomorphism problem in time polynomial in $\log n$  in the black-box model. 
(Notice that their algorithm is actually a generalization of Shor's algorithm \cite{ShorSICOMP97}, 
which can be seen as solving the group isomorphism problem over cyclic groups.)
This then gives an exponential speed-up with respect to the best known classical
algorithms  for the same task. One can naturally ask whether a similar speed-up can be obtained for classes of nonabelian groups.
In this paper, we prove that this is the case. Our main result is the following theorem.
\begin{theorem}\label{theorem_main}
There exists a quantum algorithm checking with high probability whether two groups $G$ and $H$ in the class $\Sp$ given as black-box groups
are isomorphic and, if this is the case, computing an isomorphism from $G$ to $H$.  
Its running time is polynomial 
in $\log n$, where $n=min(\abs{G},\abs{H})$.
\end{theorem}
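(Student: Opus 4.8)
The plan is to reduce isomorphism in $\Sp$ to the comparison of a small amount of explicit linear-algebraic data, and to check that every step runs in quantum time $\poly(\log n)$ by combining three ingredients: (i) Watrous's framework of quantum algorithms for solvable black-box groups (every $G\in\Sp$ is metabelian, since $G/A\cong\Int_m$ is abelian and hence $[G,G]\le A$), which supplies order computation, normal closures, and the computation of structural subgroups such as the Fitting subgroup $F(G)$; (ii) Shor's algorithm \cite{ShorSICOMP97}, to factor group orders and pass to primary components; and (iii) the Cheung--Mosca algorithm \cite{Cheung+01} for decomposing an abelian black-box group into cyclic factors, which also provides discrete logarithms and hence lets us express any element of an abelian subgroup in a chosen basis. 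The structural starting point is Schur--Zassenhaus: since $\gcd(\abs{A},m)=1$, the extension $1\to A\to G\to\Int_m\to1$ splits, so $G\cong A\rtimes_\alpha\Int_m$, where $\alpha\in\operatorname{Aut}(A)$ is conjugation by a generator of a complement. The same theorem, fed into a direct computation with a candidate homomorphism, shows that $A\rtimes_\alpha\Int_m\cong B\rtimes_\beta\Int_m$ if and only if $A\cong B$ and there are an isomorphism $\psi\colon A\to B$ and a unit $t\in(\Int/m\Int)^\times$ with $\psi\alpha\psi^{-1}=\beta^t$.

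First I would recover canonical data $(A,\alpha,m)$ from the black box. Take $A$ to be the product of all normal abelian Sylow subgroups of $G$; for $G\in\Sp$ this is the unique maximal normal abelian subgroup whose order is coprime to its index, it is characteristic, and it is an $\Sp$-witness (cf. \cite{LeGallSTACS09}). To compute it, note that $F(G)$ is abelian for $G\in\Sp$ (indeed $F(G)=A\times\gen{y^{\operatorname{ord}(\alpha)}}$); so compute $\abs{G}$ and $F(G)$, decompose $F(G)$ with Cheung--Mosca, and take $A$ to be the Hall subgroup of $F(G)$ supported on the primes $p$ with $v_p(\abs{F(G)})=v_p(\abs{G})$, and $m=[G:A]$. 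Next find a generator of the cyclic quotient $G/A$ (a black-box group in which membership in $A$ is decidable by discrete logarithm), lift it to $y_0\in G$, and correct it to an element $y$ of order $m$ by solving the ``norm'' equation $\sum_{i<m}\gamma^i(a)=-y_0^m$ in $A$, where $\gamma$ is conjugation by $y_0$; this is solvable because $m$ is invertible modulo $\abs{A}$, and it is effective Schur--Zassenhaus. Finally, conjugate a basis of $A$ by $y$ and take discrete logarithms to obtain $\alpha$ as an explicit matrix over the relevant product of rings $\Int/p^{e}\Int$. Performing the same for $H$ reduces Theorem~\ref{theorem_main} to a purely classical problem: given explicit $(A,\alpha,m)$ and $(B,\beta,m)$ with $\gcd(\abs{A},m)=1$, decide whether some $\psi\in\operatorname{Iso}(A,B)$ and $t\in(\Int/m\Int)^\times$ satisfy $\psi\alpha\psi^{-1}=\beta^t$, and if so output them --- whence an explicit isomorphism $G\to H$, sending $a y^i\mapsto\psi(a)\,y_H^{ti}$, is immediate.

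For the classical step, reject unless $A\cong B$ and work one prime $p$ at a time. On the $p$-part, $\alpha$ has order coprime to $p$, hence is semisimple, and its minimal polynomial over $\field_p$ has degree at most the number of cyclic factors of $A_p$, which is $O(\log n)$; so it can be computed and factored over $\field_p$ in time $\poly(\log n)$, yielding the decomposition of $A_p$ as a module over a product of Galois rings $\Int/p^{e}\Int[x]/(f(x))$ and, with it, a complete invariant of $\alpha_p$ up to $\operatorname{Aut}(A_p)$-conjugacy. Replacing $\alpha$ by $\alpha^t$ permutes these data via $\zeta\mapsto\zeta^t$ on the roots of unity (of order dividing $m$) that label the irreducible factors; since across all primes there are only $O(\log n)$ such factors, matching the invariants of $\alpha$ and $\beta$ reduces to $O(\log n)$ simultaneous conditions on $t$, each constraining $t$ modulo a divisor of $m$ to a coset of a cyclic subgroup of units, whose joint consistency can be decided --- using the prime factorizations already in hand and discrete logarithms in the relevant unit groups --- in time $\poly(\log n)$, and crucially \emph{without} enumerating the super-polynomially many divisors of $m$.

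The main obstacle, I expect, is the faithful extraction of $(A,\alpha,m)$ from a black box: pinning down $A$ correctly (the defining data of $\Sp$ is not literally canonical, so one works with the characteristic subgroup above and must verify it is an $\Sp$-witness), computing $F(G)$ and the cyclic quotient $G/A$ within the solvable-black-box toolkit while remaining in quantum time $\poly(\log n)$, and certifying that the computed matrix really represents conjugation by $y$. A secondary difficulty is organising the classical comparison so that the search for the twist $t$ --- over which irreducible factor of $\alpha$ is matched to which factor of $\beta$ --- does not blow up combinatorially; one must exploit that there are only $O(\log n)$ factors and that each putative matching imposes a coset constraint on $t$ modulo a divisor of $m$, rather than iterating over candidates. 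Every quantum subroutine involved is bounded-error and can be amplified, so the overall procedure decides isomorphism (and, when applicable, outputs an isomorphism) with high probability, as claimed.
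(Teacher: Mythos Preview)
Your high-level strategy coincides with the paper's: extract a canonical triple $(A,\alpha,m)$ from each black box, verify $A_1\cong A_2$ and $m_1=m_2$, and then decide whether some $t\in\Int_m^\ast$ and some isomorphism $\psi$ satisfy $\psi\alpha\psi^{-1}=\beta^t$. The paper's Proposition~\ref{proposition_class} is exactly your Schur--Zassenhaus criterion, and its Proposition~\ref{theorem_red} is the reduction from $\operatorname{Aut}(A)$-conjugacy to conjugacy in $\prod_i GL(r_i,\Int_{p_i})$. So the architecture is right; what differs is how the two hard steps are executed.

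For the extraction of $A$, the paper does \emph{not} go through the Fitting subgroup. It builds a dedicated procedure (\proc{Decompose}, Section~\ref{section_standard}) that sorts prime-power-order generators into the $A$-side or the $y$-side using commutation tests against $G'$ and against one another, plus a membership test in cyclic subgroups. Your route via $F(G)$ is a genuine alternative and is conceptually appealing --- indeed $F(G)=A\times\gen{y^{\operatorname{ord}(\alpha)}}$ is abelian for $G\in\Sp$, and your Hall-subgroup criterion does recover a characteristic $\Sp$-witness (possibly with a larger $A$ and smaller $m$ than the paper's standard decomposition, which is harmless). But you have not justified that $F(G)$ can be computed in quantum $\poly(\log n)$ time: Watrous's toolkit gives order and membership, not $F(G)$ directly. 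You would need to add an argument, e.g.\ that for $G\in\Sp$ the subgroup $F(G)$ is the unique maximal normal abelian subgroup and can be assembled from $G'$ together with the primary decomposition of the abelian quotient $G/G'$.

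The real gap is the ``find $t$'' step. You correctly reduce to the question: does some $t\in\Int_m^\ast$ simultaneously send each of $O(\log n)$ finite multisets of roots of unity to the corresponding target multiset under $\lambda\mapsto\lambda^t$? And you correctly note that for each multiset the admissible $t$'s form a coset of a subgroup of $\Int_m^\ast$ (not, in general, a \emph{cyclic} subgroup --- $\Int_m^\ast$ need not be cyclic). But you do not say how to \emph{compute} those cosets, nor how to intersect them, in $\poly(\log n)$ time; your remark ``without enumerating the super-polynomially many divisors of $m$'' names the obstacle without removing it, and labelling the step ``purely classical'' is misleading since even setting up the constraints requires quantum discrete logarithms in extension fields. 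This is precisely the content the paper isolates as the problems \proc{Set Discrete Logarithm} (Section~\ref{section:set}) and \proc{Discrete Log up to Conjugacy} (Section~\ref{section:conjugacy}): within each order-class it finds one coset representative via $O(v)$ standard discrete logs followed by a nontrivial lift from $\Int_{m_i}^\ast$ to $\Int_m^\ast$, computes the stabilizer subgroup via an abelian HSP instance, and intersects the resulting cosets via a further HSP-based routine (Proposition~\ref{coset_int}). Your sketch stops exactly where this machinery begins; without supplying it, the argument is incomplete.
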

To our knowledge, this is the first quantum algorithm solving nonabelian instances of the  group isomorphism problem 
exponentially faster than the best known classical algorithms. 
Our algorithm relies on several new quantum reductions to instances of the so-called abelian Hidden Subgroup Problem, a problem  that can be solved efficiently on a quantum computer. 
Our result can then be seen as an extension of the polynomial time library of computational tasks which can be accomplished
using Shor's factoring and discrete logarithm algorithms \cite{ShorSICOMP97}, and further quantum algorithms for abelian groups. 
We also mention that groups in the class $\Sp$ appear at several occasions in the quantum computation
literature, mostly connected to the Hidden Subgroup Problem over semidirect product groups \cite{Bacon+FOCS05,Ettinger+00, Inui+QIC07,Moore+SODA04}.
Our techniques may have applications in the design of further quantum algorithms for this problem, or for other similar group-theoretic tasks.

\paragraph{Overview of our algorithm}
Our quantum algorithm follows the same line as the classical algorithm in \cite{LeGallSTACS09}, but the two main technical parts
are both significantly improved and modified.

Since a group $G$ in the class $\Sp$ may in general be written as the extension of 
an abelian group $A_1$ by a cyclic group $\Int_{m_1}$ and as the extension of 
an abelian group $A_2$ by a cyclic group $\Int_{m_2}$ with 
$A_1\not\cong A_2$ and $m_1\neq m_2$, we use, as in \cite{LeGallSTACS09}, the concept of a standard
decomposition of $G$, which is an invariant for the groups in the class $\Sp$ in the sense that
two isomorphic groups have similar standard decompositions (but the converse is false). 
A method for computing efficiently standard decompositions in the black-box model was one of the main contributions of \cite{LeGallSTACS09},
where the time complexity of this step was $O(n^{1+o(1)})$ due to the fact that the procedure proposed had to try, in the worst case, 
for each generator $g$ of $G$, all the divisors of $\abs{g}$.
Instead, in the present work we propose a different procedure for this task (Section \ref{section_standard}), which
can be implemented in time polynomial in $\log n$ on a quantum computer,
based on careful reductions to group-theoretic 
problems for which known efficient quantum algorithms are known: order finding, decomposing abelian groups and 
constructive membership in abelian groups.

Knowing standard decompositions of $G$ and $H$ 
allows us to consider only the case where $H$ and $G$ are two extensions of the same abelian group $A$ by the same cyclic group $\Int_m$
(Proposition \ref{proposition_class}).
Two matrices $M_1$ and $M_2$ in the group $GL(r,\field)$ of invertible matrices of size $r\times r$ 
over some well-chosen finite field $\field$ can then be associated to the action of $\Int_m$ on $A$ in the groups $G$ and $H$ respectively. 
The second main technical contribution of \cite{LeGallSTACS09} showed that, loosely speaking, testing isomorphism of $G$ and $H$ then reduces 
(when the order of $A$ is coprime with $m$) to checking whether
there exists an integer $k\bound{1}{m}$ such that $M_1$ and $M_2^k$ are conjugate in $GL(r,\field)$ 
(a precise version of this statement is given in Proposition~\ref{theorem_red} of the present paper).
The strategy adopted in $\cite{LeGallSTACS09}$ 
to solve this problem had time complexity close to $n$ in the worst case (basically, all the integers $k$ in $\{1,\ldots,m\}$ were checked). 
In the present paper, we give a $\poly(\log n)$ time quantum algorithm for this problem.
More generally, we show in Section \ref{section:conjugacy} that the problem of testing, for any two matrices $M_1$ and $M_2$ in $GL(r,\field)$ where $r$ is any positive integer and $\field$ is any finite field,  whether there exists a positive integer $k$ such that $M_1$ and $M_2^k$ are conjugate in the group $GL(r,\field)$ reduces to solving an instance of a problem we call $\proc{Set Discrete Logarithm}$. 
This quantum reduction is efficient in that it can be implemented in time polynomial in both $r$ and $\log\abs{\field}$, and works by considering field extensions of 
$\field$ and matrix invariants of $M_1$ and $M_2$.

Loosely speaking, the problem  $\proc{Set Discrete Logarithm}$ asks, given two sets $\{x_1,\ldots,x_v\}$ and $\{y_1,\ldots,y_v\}$ of elements in $\field$,
to compute an integer $k$ such that $\{y_1^k,\ldots,y_v^k\}=\{x_1,\ldots,x_v\}$, if such an integer exists.
This computational problem is a generalization of the standard discrete logarithm problem (which is basically the case $v=1$) but appears to be much more challenging.\footnote{To illustrate this point, let us consider the following simple strategy: 
for each $j\bound{1}{v}$, try to find some $k$ such that $y_1^k=x_j$ using the quantum algorithm for the standard discrete logarithm problem by Shor \cite{ShorSICOMP97}, and then check whether $\{y_1^k,\ldots,y_v^k\}=\{x_1,\ldots,x_v\}$. 
The problem here is that a $k$ such that  $y_1^k=x_j$ will be only defined modulo $\abs{y_1}$, and it may be the case that $\{y_1^k,\ldots,y_v^k\}\neq\{x_1,\ldots,x_v\}$
but $\{y_1^{k'},\ldots,y_v^{k'}\}=\{x_1,\ldots,x_v\}$ for some $k'$ satisfying $k'=k\bmod \abs{y_1}$. Testing all these $k'$'s can take exponential time.} 
The quantum algorithm we propose (in Section \ref{section:set}) works in time polynomial in $v$ and $\log\abs{\field}$, and relies on a reduction to several instances of the abelian Hidden Subgroup Problem. 
Our solution to the problem $\proc{Set Discrete Logarithm}$ is then an extension of the computational tasks which can be solved efficiently
using known quantum algorithms for abelian groups. 

\paragraph{Other related works}
To our knowledge, the only other work on polylogarithmic time nonabelian group isomorphism testing in the back-box setting is a
body of results, initiated by Cooperman et al.~\cite{cooperman+97}, focusing on identifying simple groups.
Remember that a simple group is a group that has no nontrivial normal subgroup. 
A celebrated result in group theory classifies all the simple finite groups into 26 sporadic groups and a few numbers of infinite classes in which
each group has a label of some prescribed form. 
A natural question that arises is, given a black-box group guaranteed to be simple, how to compute this label, i.e., how to identify this group?
It is known that, based on the mathematical properties of the simple groups, 
it is possible to do this (classically) in polylogarithmic time whenever the input is guaranteed to be a so-called classical group over a field of known characteristic.
We refer to the book by Kantor and Seress \cite{Kantor+01} and references therein for an extensive treatment of this subject.

\section{Preliminaries}\label{section_prelim}
\subsection{Group theory and standard decompositions}
We assume that the reader is familiar with the basic notions of group theory and state without proofs 
definitions and properties of groups we will use in this paper. 

For any positive integer $m$, we denote by $\Int_m$ the additive cyclic group of integers $\{0,\ldots,m-1\}$, and by $\Int_m^\ast$ the multiplicative group of integers in $\{1,\ldots,m-1\}$
coprime with $m$.

Let $G$ be a finite group.
For  any subgroup $H$ and any normal subgroup $K$ of $G$ we denote by $HK$ the subgroup $\{hk\st  h\in H,k\in K\}=\{kh\st  h\in H,k\in K\}$. 
Given a set $S$ of elements of $G$, the subgroup generated by the elements of $S$ is written $\gen{S}$.
We say that two elements $g_1$ and $g_2$ of $G$ are conjugate in $G$ if there exists an element $y\in G$ such that
$g_2=yg_1y^{-1}$. For  any two elements $g,h\in G$ we denote by $[g,h]$ the commutator
of $g$ and $h$, i.e., $[g,h]=ghg^{-1}h^{-1}$. More generally, given two subsets $S_1$ and $S_2$ of $G$, we define
$[S_1,S_2]=\gen{[s_1,s_2]\:|\:s_1\in S_1,s_2\in S_2}$.
The commutator subgroup of $G$ is defined as $G'=[G,G]$. 
The derived series of $G$ is defined recursively as $G^{(0)}=G$ and
$G^{(i+1)}=(G^{(i)})'$. The group $G$ is said to be solvable if there exists some integer $k$ such 
that $G^{(k)}=\triv$.
Given two groups $G_1$ and $G_2$, a map $\phi:G_1\to G_2$ is a homomorphism from $G_1$ to $G_2$ if, for any two elements 
$g$ and $g'$ in $G_1$, the relation $\phi(gg')=\phi(g)\phi(g')$ holds.  We say that $G_1$ and $G_2$ are isomorphic if there exists a 
one-one homomorphism from $G_1$ to $G_2$, and we write $G_1\cong G_2$.

Given any finite group $G$, we denote by $\abs{G}$ its order and, given any element $g$ in $G$, we denote by $\abs{g}$ the order of
$g$ in $G$. For any prime $p$, we say that a group is a $p$-group if its order is a power of $p$.
If $\abs{G}=p_1^{e_i}\ldots p_r^{e_r}$ for distinct prime numbers $p_i$, then 
for each $i\bound{1}{r}$ the group $G$ has a subgroup  of order $p_i^{e_i}$. Such a subgroup is called a Sylow $p_i$-subgroup of $G$.
Moreover, if $G$ is additionally abelian, then each Sylow $p_i$-group is unique and
$G$ is the direct product of its Sylow subgroups.
Abelian $p$-groups have remarkably simple structures: any abelian 
$p$-group is isomorphic to a direct product of cyclic $p$-groups $\Int_{p^{f_1}}\times\cdots\times\Int_{p^{f_s}}$
for some positive integer $s$ and positive integers $f_1\le \ldots\le f_s$, and this decomposition is unique.
We say that a set $\{g_1,\ldots,g_t\}$ of $t$ elements of an abelian group $G$ is a basis of $G$ if 
$G=\gen{g_1}\times\cdots\times\gen{g_t}$ and the order of each $g_i$ is a prime power.

For a given group $G$ in the class $\Sp$ in general many different decompositions as an extension of an abelian group by a 
cyclic group exist. For example, the abelian group $\Int_6=\gen{x_1,x_2 \st x_1^2=x_2^3=[x_1,x_2]=e}$ 
can be written as $\gen{x_1}\times\gen{x_2}$, $\gen{x_2}\times\gen{x_1}$ or $\gen{x_1,x_2}\times\triv$.
That is why we introduce the notion of a standard decomposition, as it was done in $\cite{LeGallSTACS09}$. 
\begin{definition}
Let $G$ be a finite group in the class $\Sp$. 
For any positive integer $m$ denote by $\Dd^m_G$ the set (possibly empty) of pairs $(A,B)$ such that the following
three conditions hold:
(i)
$A$ is a normal abelian subgroup of $G$ of order coprime with $m$; and
 (ii)
 $B$ is a cyclic subgroup of $G$ of order $m$; and
 (iii)
 $G=AB$.
Let $\gamma(G)$ be the smallest positive integer such that $\Dd^{\gamma(G)}_G\neq\varnothing$.
A standard decomposition of $G$ is an element of $\Dd^{\gamma(G)}_G$.
 \end{definition}

\subsection{Black-box groups and the abelian Hidden Subgroup Problem}\label{sub:clBB}

In this paper we work in the black-box model, first introduced (in the classical setting) by  Babai and Szemer{\'e}di \cite{Babai+FOCS84}.
A black-box group is a representation of a group $G$ where elements are represented by strings, and an oracle is available
to perform group operations.
To be able to take advantage of the power of quantum computation when dealing with black-box groups, 
the oracles performing group operations have to be able to deal with quantum superpositions.
These quantum black-box groups have been first studied by Ivanyos et al.~\cite{Ivanyos+03} 
and Watrous \cite{WatrousFOCS00,WatrousSTOC01}, and have become the standard model for studying group-theoretic
problems in the quantum setting.

More precisely, a quantum black-box group is a representation  of a group where elements are represented by 
strings (of the same length, supposed to be logarithmic in the order of the group). 
We assume the usual unique encoding hypothesis, i.e., each element of the group is encoded by a unique string, which is crucial for technical reasons
(without it, most quantum algorithms do not work). 
A quantum oracle $V_G$ is available, such that $V_G(\ket{g}\ket{h})=\ket{g}\ket{gh}$ for any $g$ and $h$ in $G$ (using strings to represent the group elements), 
and behaving in an arbitrary way on other inputs.\footnote{A quantum oracle computing the inverse of elements is not necessary since the inverse of an element can be computed if one knows its order --- 
this latter task can be  done efficiently as stated in Theorem \ref{tasks}.}
We say that a group $G$ is input as a black-box if a set of strings representing generators $\{g_1,\ldots,g_s\}$ of 
$G$ with $s=O(\log\abs{G})$ is given as input, and queries to the oracle can be done at cost 1.
The hypothesis on $s$ is natural since every group $G$ has a generating set of size $O(\log \abs{G})$, and enables us 
to make the exposition of our results easier. Also notice that a set of generators of any size can 
be converted efficiently into a set of generators of size $O(\log\abs{G})$ if randomization 
is allowed \cite{BabaiSTOC91}.

Any efficient quantum black-box algorithm gives rise to an efficient concrete quantum algorithm whenever
the oracle operations can be replaced by efficient procedures. 
Especially, when a mathematical expression of the generators input to the algorithm is known, 
performing group operations can be done directly on the elements in polynomial time (in $\log\abs{G}$) for many
natural groups, including permutation groups and matrix groups. 
This is why the black-box model is one of the most general settings to work with when considering group-theoretic problems, 
and especially when designing  sublinear-time algorithms for such problems.

Quantum algorithms are very efficient for solving computational problems over abelian groups. In the following theorem, we 
describe the main results we will need in this paper.
\begin{theorem}[\cite{Cheung+01,Ivanyos+03,ShorSICOMP97}]\label{tasks}
There exists quantum algorithms solving, in time polynomial in $\log \abs{G}$, the following computational tasks with probability at least $1-1/\poly(\abs{G})$:
\begin{itemize}
\item[(i)]
Given a group $G$ given as a  black-box (with unique encoding) and any element $g\in G$, compute the order of $g$ in $G$.
\item[(ii)]
Given an abelian group $G$ given as a  black-box (with unique encoding), compute a basis $(g_1,\ldots,g_s)$ of $G$.
\item[(iii)] 
Given an abelian group $G$ given as a  black-box (with unique encoding), a basis $(g_1,\ldots,g_s)$ of $G$, and any $g\in G$, 
compute a decomposition of $g$ over $(g_1,\ldots,g_s)$, i.e., integers $u_1,\ldots, u_s$ such that $g=g_1^{u_1}\cdots g_s^{u_s}$.
\end{itemize}
\end{theorem}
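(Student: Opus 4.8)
The plan is to derive all three tasks from two quantum primitives: Shor's order-finding procedure and Fourier sampling for the abelian Hidden Subgroup Problem (HSP). The common feature exploited throughout is the \emph{unique encoding} hypothesis: because each group element has exactly one string representation and the oracle $V_G$ lets us compute products coherently, a map of the form $x\mapsto g_1^{a_1}\cdots g_s^{a_s}$ (evaluated by repeated squaring with $O(\log\abs{G})$ oracle calls, which is reversible since left multiplication is a bijection) is a genuine function on superpositions whose level sets are exactly the cosets of the subgroup we wish to hide. A further point specific to the black-box model is that, although $\abs{G}$ is not given, the common length $\ell=O(\log\abs{G})$ of the encoding strings furnishes the a priori bound $\abs{G}\le 2^{\ell}$, which is all we need to fix the size of the Fourier register.

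For task (i), I would apply order finding to the cyclic map $f\colon\Int\to G$, $f(x)=g^x$. Choosing a power of two $N=2^t$ with $t=O(\ell)$ so that $N\gg\abs{g}^2$, I prepare $\tfrac{1}{\sqrt N}\sum_{x}\ket{x}\ket{g^x}$, and by unique encoding the second register collides precisely on residues modulo $\abs{g}$, so $f$ is periodic with period $\abs{g}$. A quantum Fourier transform over $\Int_N$ followed by measurement yields, with constant probability, an integer close to a multiple of $N/\abs{g}$; continued fractions then recover a divisor of $\abs{g}$, and taking the least common multiple over $O(\log\abs{G})$ independent repetitions gives $\abs{g}$ exactly with the claimed success probability.

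For task (ii), I would cast the decomposition as an HSP over $\Int^s$. Given generators $h_1,\dots,h_s$, define the homomorphism $f(a_1,\dots,a_s)=h_1^{a_1}\cdots h_s^{a_s}$ from $\Int^s$ to $G$; its kernel $L$ is a full-rank sublattice of $\Int^s$ with $G\cong\Int^s/L$. Restricting the domain to a box of side $N=2^t$ and performing Fourier sampling produces vectors in (a scaled version of) the dual lattice, and a coupon-collector argument shows that $\poly(s)=\poly(\log\abs{G})$ samples generate $L$ with probability $1-1/\poly(\abs{G})$. Computing the Smith normal form of a generating matrix of $L$ yields the invariant factors of $G$; splitting each cyclic factor into its prime-power components by the Chinese Remainder Theorem and reading off the corresponding products of the $h_i$ gives a basis in the required sense.

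Task (iii) is the generalization of discrete logarithm and follows the same template. With the basis $(g_1,\dots,g_s)$ of known prime-power orders and target $g$, I would form the map $(b,a_1,\dots,a_s)\mapsto g^{b}g_1^{-a_1}\cdots g_s^{-a_s}$ on $\Int^{s+1}$; since $g\in G$ its kernel $L'$ contains a vector with first coordinate equal to $1$. Solving the HSP as above recovers generators of $L'$, and integer linear algebra (again via Smith normal form) extracts such a vector $(1,u_1,\dots,u_s)$, whose entries, reduced modulo the $\abs{g_i}$, are the desired exponents. The main obstacle in all three parts is the rigorous analysis of abelian Fourier sampling — establishing that a polynomial number of measurements generates the hidden lattice (rather than a proper sublattice) with inverse-polynomial failure probability, and controlling the classical continued-fraction and lattice post-processing — together with the verification that the modulus $N$, fixed from the encoding length alone, is large enough for the period and lattice reconstruction to be unambiguous.
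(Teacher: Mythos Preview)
Your sketch is correct and follows the standard route to these results; the paper itself does not prove Theorem~\ref{tasks} but simply attributes (i) to Shor's order-finding, (ii) to Cheung--Mosca, and (iii) to Ivanyos--Magniez--Santha, which are precisely the algorithms whose mechanics you outline. In that sense your proposal is more detailed than the paper's treatment rather than different from it.

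One small remark on (iii): the paper (following Ivanyos et al.) phrases the problem with a basis already in hand, so the orders $\abs{g_i}$ are known prime powers and the ambient group for the HSP can be taken to be the finite group $\Int_{\abs{g}}\times\Int_{\abs{g_1}}\times\cdots\times\Int_{\abs{g_s}}$ rather than $\Int^{s+1}$; this avoids the lattice analysis altogether and reduces directly to the finite abelian HSP. Your infinite-lattice formulation also works, but the finite version is what the cited reference actually does and is slightly cleaner.
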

More precisely, Task (i) can be solved using a black-box version of Shor's algorithm \cite{ShorSICOMP97}, 
Task (ii) can be solved using Cheung and Mosca's algorithm \cite{Cheung+01}, 
and Task (iii) can be solved using the quantum algorithm by Ivanyos et al.~\cite{Ivanyos+03}.
The discrete logarithm problem is the special case of task (iii) above when $G$ is a cyclic group.
Moreover, since factoring an integer reduces to computing the order of elements in a cyclic group, the efficient solution to Task (iii) implies an efficient solution for 
the integer factoring problem (we refer to Shor's paper \cite{ShorSICOMP97} for a precise description of this reduction).  

Actually, all the tasks in Theorem \ref{tasks} can be seen as black-boxes versions of instances of the so-called Hidden Subgroup Problem (HSP) over abelian groups.
We now recall the definition of this problem, since we will need it in Section \ref{section:conjugacy}.
Let $G$ be a group, $K$ be a subgroup of $G$, and $X$ be a finite set.
A function $f:G\to X$ is said to be $K$-periodic if $f$ is constant on each left coset of $K$, with
distinct value on distinct cosets.
Given as inputs
(i)
a group $G$ given as a set of generators, and
(ii)
a function $f$ given as an oracle,
which is $K$-periodic for an unknown subgroup $K$ of $G$,
the Hidden Subgroup Problem asks to 
output a set of generators for $K$. The abelian Hidden Subgroup Problem 
is the special case where the underlying group $G$ is abelian. 
It is known that the abelian HSP can be solved in time polynomial in $\log \abs{G}$ \cite{Kitaev95}, 
even if $G$ is given as a black-box group with unique encoding~\cite{Ivanyos+03,Mosca99}.

\subsection{Invariant factors and elementary divisors of a matrix}\label{sub:inv}
In this subsection we review the notions of invariant factors and elementary divisors of a matrix.
These are standard results, 
and we refer to any textbook on algebra (e.g., \cite{Dummit+04}) for proofs and more details. 
In this subsection $\field$ denotes a finite field, and $GL(r,\field)$ denotes the group of invertible matrices of size $r\times r$ over $\field$ 
for some positive integer $r$.

Let $a(x)=x^k+b_{k-1}x^{k-1}+\ldots+b_1x+b_0$ be any monic polynomial in $\field[x]$.
The \emph{companion matrix} of $a(x)$, denoted by $C_{a(x)}$ is the $k\times k$ matrix with 1's down the first subdiagonal,
$-b_0$, $-b_1$,\ldots, $-b_{k-1}$ down the last column and zero elsewhere.
For example, the companion matrix of $x^4+b_3x^3+b_{2}x^{2}+b_1x+b_0$ is the matrix
$$
\left(\begin{array}{llll}
0&0&0&-b_0\\
1&0&0&-b_1\\
0&1&0&-b_2\\
0&0&1&-b_{3}
\end{array}\right).
$$

Let $M$ be a matrix in $GL(r,\field)$. Then it is known that there exists a unique list $(a_1(x),\ldots,a_s(x))$ of monic polynomials  in $\field[x]$, with
 each polynomial $a_i(x)$ dividing $a_{i+1}(x)$ for each $i\bound{1}{s-1}$, such that
 $M$ is similar to the block diagonal matrix $diag(C_{a_1(x)},\ldots, C_{a_s(x)})$. This list of polynomials is called the 
 \emph{invariant factors} of $M$, and
this block diagonal matrix is called the \emph{rational normal form} of the matrix $M$ and is unique.
In particular the polynomial $a_s(x)$ is the minimal monic polynomial of $M$, i.e., the (unique) monic polynomial of
smallest degree in $\field[x]$ such that $a_s(M)=0$. 
It is known that matrices are conjugate in $GL(r,\field)$ if and only if they have the same invariant factors (or 
equivalently if they have the same rational normal form).
Moreover, these invariant are the same if $M$ is seen as a matrix over a field extension $\fieldK$ of $\field$, i.e.,
two matrices in $GL(r,\field)$ are similar in $GL(r,\field)$ if and only if  they are similar in $GL(r,\fieldK)$.

Let $\fieldK$ be a field extension of $\field$ that splits the minimal polynomial $a_s(x)$ of $M$, i.e.,
$a_s(x)=(x-\lambda_1)^{b_1}\cdots(x-\lambda_t)^{b_t}$ where the $\lambda_i$'s are
distinct elements of $\fieldK$ and the $b_i$'s are their multiplicities. 
Each invariant factor $a_i(x)$ of $M$ can then be written as $a_i(x)=(x-\lambda_1)^{c_{i1}}\cdots(x-\lambda_t)^{c_{it}}$,
where each $c_{ij}$ is a nonnegative integer in $\{0,\ldots,b_j\}$. 
Then the set of \emph{elementary divisors} of $M$ is the set with possible repetitions
$$
\{
(x-\lambda_j)^{c_{ij}}\:|\:i\bound{1}{s}, j\bound{1}{t} \textrm{ such that }c_{ij}\neq 0
\}.
$$ 
The set of elementary divisors associated to $M$ is unique, and it is known that two matrices are
similar in $GL(r,\field)$ if and only if they have the same set of elementary divisors over $\fieldK$,
when $\fieldK$ is an extension field of $\field$ splitting both their minimal polynomials.
For example, suppose that $r=4$, $s=2$, $a_2(x)=(x-\lambda_1)(x-\lambda_2)^2$,
and $a_1(x)=(x-\lambda_1)$ for distinct elements
$\lambda_1$ and $\lambda_2$ in $\fieldK$. Then the  set of elementary divisors is $\{(x-\lambda_1),(x-\lambda_1),(x-\lambda_2)^2\}$.

The elementary divisors of $M$ are closely connected to the so-called Jordan normal form of $M$.
Let $c$ be a nonnegative integer and $\lambda$ be an element in $\fieldK$.
The Jordan matrix of size $c$ associated to $\lambda$, denoted by $J(\lambda,c)$, 
is the $c\times c$ matrix with $\lambda$ along the main diagonal and 1 along the first superdiagonal.
For example:
$$
J(\lambda,4)=\left(\begin{array}{llll}
\lambda&1&0&0\\
0&\lambda&1&0\\
0&0&\lambda&1\\
0&0&0&\lambda
\end{array}\right).
$$
It is easy to check that the minimal polynomial of $J(\lambda,c)$ is $(x-\lambda)^c$.
In particular, this shows that the set of elementary divisors of $J(\lambda,c)$ is $\{(x-\lambda)^c\}$. 

Suppose that the set of elementary divisors of a matrix $M$ (in $GL(r,\field)$, but seen as a matrix in $GL(r,\fieldK)$ where $\fieldK$ splits its minimal polynomial) is 
$\{
(x-\lambda_k)^{d_{k}}\:|\:k\bound{1}{\ell}
\},$
where the $\lambda_k$'s may not be distinct (and necessarily $r=\sum_{k=1}^\ell d_i$).
Then it is known that $M$ is similar over $GL(r,\fieldK)$ to the block diagonal matrix
$$diag( J(\lambda_1,d_1),\ldots, J(\lambda_\ell,d_\ell)).$$
This block diagonal matrix is called the \emph{Jordan normal form} of $M$ and is unique up to the ordering of the $\lambda_i$'s.
For example the Jordan normal form for the example considered above with the set of elementary divisors $\{(x-\lambda_1),(x-\lambda_1),(x-\lambda_2)^2\}$ 
is
$$
diag(J(\lambda_1,1),J(\lambda_1,1),J(\lambda_2,2))=
\left(\begin{array}{llll}
\lambda_1&0&0&0\\
0&\lambda_1&0&0\\
0&0&\lambda_2&1\\
0&0&0&\lambda_2
\end{array}\right).
$$

\section{Computing a Standard Decomposition}\label{section_standard}
In this section we present a quantum algorithm computing a standard decomposition of any group in the class $\Sp$ in time polynomial in the logarithm of the order of the group.

\subsection{Description of the algorithm}\label{sub_descr}
The precise description of the algorithm, which we denote Procedure $\proc{Decompose}$, is given in metacode in Figure \ref{figure:procedure}.
\begin{figure}[h!]
\hrule\vspace{-3mm}
\begin{codebox}
\Procname{Procedure $\proc{Decompose}$} 
\zi \const{input:} a set of generators $\{g_1,\ldots,g_s\}$ of a group $G$ in $\Sp$ with $s=O(\log\abs{G})$.
\zi \const{output:} a pair $(U,v)$ where $U$ is a subset of $G$ and $v\in G$.
\li compute a set of generators $\{g'_1,\ldots,g'_t\}$ of the derived subgroup $G'$ with $t=O(\log\abs{G})$;
\li compute $\kappa=lcm(\abs{g_1},\ldots,\abs{g_s})$;
\li factorize $\kappa$ and write $\kappa=p_1^{e_1}\cdots p_r^{e_r}$ where the prime numbers $p_i$ are distinct;
\li $U\gets \{g'_1,\ldots,g'_t\}$; $V\gets \varnothing$; $\Sigma\gets\varnothing$;
\li   \For $i =1$ \To $r$
\li \Do 
\li $\Gamma_i\gets \varnothing$;
\li \For $j =1$ \To $s$ \kw{do} $\Gamma_i\gets \Gamma_i\cup\{g_j^{\kappa/p_i^{e_i}}\}$;
\li \If $[\Gamma_i,G'] = e$  \kw{and} $gcd(p_i,\abs{G'})\neq 1$ \kw{then}  $U\gets U\cup \Gamma_i$;
\li \If $[\Gamma_i,G'] = e$  \kw{and} $gcd(p_i,\abs{G'})= 1$
\li\hspace{25mm} \kw{then}  
\li \hspace{30mm}search for an element $\gamma_i\in \Gamma_i$ such that $\gen{\Gamma_i}G'=\gen{\gamma_i,G'}$;
\li \hspace{30mm}\kw{if} no such element exists 
\li \hspace{40mm}\kw{then} $U\gets U\cup \Gamma_i$
\li \hspace{40mm}\kw{else} $\Sigma\gets \Sigma\cup \{\gamma_i\}$; 
\li \hspace{25mm}\kw{endthen} 
\li \If $[\Gamma_i,G']\neq e$  \kw{then}  \{  take an element $\gamma_i\in\Gamma_i$ such that $\abs{\gamma_i}=\max_{\gamma\in\Gamma_i}\abs{\gamma}$;
\li \hspace{32mm} $V\gets V\cup\{\gamma_i\}$;$\:\:$\}
\End
\li \hspace{8mm}\bf{enddo}
\li \For all $w$ in $\Sigma$ 
\li\hspace{6mm}\kw{do} 
\li \hspace{10mm}\kw{if} there exists an element $z$ in $\Sigma$ such that $[w,z]\neq e$
\li \hspace{15mm}\kw{then}
\{ \If $zwz^{-1}\in \gen{w}$ \kw{then} $U\gets U\cup \{w\}$ \kw{else}  $V\gets V\cup \{w\}$; \}
\End
\li \hspace{5mm} \kw{enddo}
\li \For all $w\in\Sigma\backslash(U\cup V)$
\li\hspace{6mm}\kw{do} 
\li \hspace{10mm}\kw{if} $[w,u]=\{e\}$ for all $u\in U$ 
\kw{then} $U\gets U\cup\{w\}$
\kw{else}  $V\gets V\cup\{w\}$;
\li \hspace{5mm} \kw{enddo} 
\li $b\gets \Pi_{g\in V}\abs{g}$; $z\gets \Pi_{g\in V}g$; $v\gets z^{\abs{z}/b}$;
\li output $(U,v)$; 
\end{codebox}\vspace{-3mm}
\hrule
\caption{Procedure $\proc{Decompose}$. }\label{figure:procedure}
\end{figure}
\noindent Further descriptions on how each step is implemented follow.
\begin{itemize}\vspace{-2mm}
\item
At Step 1 a set of generators $\{g'_1,\ldots,g'_{t}\}$ of the derived subgroup $G'$ with $t=O(\log\abs{G})$ is computed in time polynomial
in $\log\abs{G}$ with success probability $1-1/\poly(\abs{G})$ using
the classical algorithm by Babai et al.~\cite{Babai+JCSS95}. 
\item\vspace{-2mm}
The order of $G'$ at Steps 9 and 10, and the orders of elements at Steps 2, 17 and 29 are computed using the quantum algorithms for Tasks (i) and (ii) in Theorem \ref{tasks}.
\item\vspace{-2mm}
The least common multiple at Step 2 is computed using standard algorithms, and is factorized at Step 3 using Shor's factoring algorithm \cite{ShorSICOMP97}.
\item\vspace{-2mm}
At Step 12, notice that $[\Gamma_i,G']=e$ implies that $\gen{\Gamma_i}G'$ is an abelian group. For each element $\gamma_i$ in $\Gamma_i$ (there are $O((\log\abs{G})^2)$ such elements),
the quantum algorithms for Tasks (i) and (ii) in Theorem \ref{tasks} are used to check whether $\abs{\gen{\Gamma_i}G'}=\abs{\gen{\gamma_i,G'}}$. Since 
necessarily $\gen{\gamma_i,G'} \le\gen{\Gamma_i}G'$, this test is sufficient to check whether $\gen{\Gamma_i}G'=\gen{\gamma_i,G'}$.
\item\vspace{-2mm}
The tests at Steps 9, 10 to 17 are done by noticing that $[\Gamma_i,G']=\triv$ if and only if $[\gamma,g'_j]=e$ for each $\gamma\in \Gamma_i$ and each $j\bound{1}{t}$ .
\item\vspace{-2mm}
Testing whether $zwz^{-1}$ is in $\gen{w}$ at Step 23 is done by trying to decompose  $zwz^{-1}$ over $\gen{w}$ using
the quantum algorithm for Task (iii) in Theorem \ref{tasks}, and then checking if the decomposition indeed represents $zwz^{-1}$ (since, a priori,
this algorithm can have an arbitrary behavior when $zwz^{-1}\notin\gen{w}$).
\end{itemize}
This description, along with Theorem \ref{tasks} and with the observation that the sets $U$, $V$ and $\Sigma$ have size $O((\log \abs{G})^2)$, show that all the steps of Procedure  $\proc{Decompose}$ can be implemented in time polynomial in $\log\abs{G}$.
The following theorem states the time complexity of  Procedure $\proc{Decompose}$, and also its correctness.
\begin{theorem}\label{th:correctness}
Let $G$ be a group in the class $\Sp$, given as a black-box group (with unique encoding). 
The procedure $\proc{Decompose}$ on input $G$ outputs, with high probability, a pair $(U,v)$
such that $(\gen{U},\gen{v})$ is a standard decomposition of $G$. 
It can be implemented in time polynomial in $\log\abs{G}$ on a quantum computer.
\end{theorem}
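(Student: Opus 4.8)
The running-time claim is already essentially established by the itemized discussion preceding the theorem: every line of Procedure $\proc{Decompose}$ is either a classical computation on polynomially many group elements (Step~1 invokes \cite{Babai+JCSS95}, Steps~2--3 use standard arithmetic together with Shor's factoring algorithm \cite{ShorSICOMP97}) or an application of one of the quantum subroutines of Theorem~\ref{tasks} --- order finding, computing a basis of an abelian subgroup, or constructive membership --- to a subgroup generated by a set of size $O((\log|G|)^2)$. Hence the whole procedure runs in time $\poly(\log|G|)$ with failure probability $1/\poly(|G|)$, and I will concentrate on correctness.

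First I would set up the structure theory of $\Sp$ that the algorithm exploits. Fix any witnessing decomposition $G=A\langle y\rangle$ with $A$ normal abelian, $|y|=m$ and $\gcd(|A|,m)=1$; by Schur--Zassenhaus this is a semidirect product $A\rtimes\langle y\rangle$. Since $G/A\cong\Int_m$, we get $G'\le A$, so $G'$ is abelian with $|G'|$ coprime to $m$, and coprimality of the action gives $A=G'\times C_A(y)$ with $G'=[A,y]$. I would record two consequences: (i) for each prime $p\mid|A|$ the Sylow $p$-subgroup of $G$ equals that of $A$, hence is normal and abelian; (ii) for each prime $p\mid m$ the Sylow $p$-subgroups of $G$ inject into $G/A$, hence are cyclic. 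Finally I would reformulate the target: if $(A^\ast,B^\ast)\in\Dd^{m}_G$ then $|A^\ast|=|G|/m$, $G'\le A^\ast$, $A^\ast\cap B^\ast=\triv$ and $G/A^\ast$ is cyclic; conversely, given a normal abelian $A^\ast\supseteq G'$ of index $m$ with $G/A^\ast$ cyclic and $\gcd(|A^\ast|,m)=1$, Schur--Zassenhaus furnishes the cyclic complement. Thus $\gamma(G)$ is the least index of such an $A^\ast$, and it suffices to prove that $\langle U\rangle$ is normal abelian, that $[G:\langle U\rangle]=b$ where $b=\Pi_{g\in V}\abs{g}$, that $\gcd(|\langle U\rangle|,b)=1$, that $|v|=b$, and that no normal abelian $A^\ast\supseteq G'$ with cyclic coprime quotient has index strictly less than $b$.

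Next I would match the data built by the procedure to this picture. Each element of $\Gamma_i$ is the $p_i$-primary part of a generator, hence a $p_i$-element; when $[\Gamma_i,G']\ne e$ the element $\gamma_i$ chosen at Step~17 has order exactly $p_i^{e_i}$ (the $p_i$-part of $\kappa$), and in the $\Sigma$-branch every admissible $\gamma_i$ has order equal to that of $\langle\overline{\Gamma_i}\rangle$ in $G/G'$, a fixed $p_i$-power. Because the primes attached to distinct elements of $V$ are distinct, the images in the abelian group $G/G'$ of the elements of $V$ have pairwise coprime orders, so $z=\Pi_{g\in V}g$ satisfies $|\bar z|=b$ in $G/G'$; hence $b\mid|z|$, so $v=z^{|z|/b}$ is well defined with $|v|=b$. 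It then remains to analyze $\langle U\rangle$: it always contains $G'$; for inactive $p_i$ with $p_i\mid|G'|$ it contains $\Gamma_i$, which by~(i) lies in the normal abelian Sylow $p_i$-subgroup of $A$; for inactive $p_i$ coprime to $|G'|$ it receives, through Steps~12--16 and~20--30, either all of $\Gamma_i$ or a single generator $\gamma_i$ of $\langle\Gamma_i\rangle G'$ modulo $G'$, the choice being dictated by commutation tests whose purpose is precisely to keep $\langle U\rangle$ abelian while expelling from it any $p_i$-part that interferes with the action. With these rules one argues that $\langle U\rangle$ is normal abelian, $G=\langle U\rangle\langle v\rangle$, and $\gcd(|\langle U\rangle|,b)=1$, so that $(\langle U\rangle,\langle v\rangle)\in\Dd^{b}_G$; most of this reduces to~(i)--(ii) together with a careful case check of the $\Sigma$-loop --- and that case check is the first of the two main obstacles.

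The second and harder obstacle is minimality, $b=\gamma(G)$. The plan is to show, one prime at a time, that any valid $A^\ast$ satisfies $v_p([G:A^\ast])\ge v_p(b)$ ($v_p$ the $p$-adic valuation). If $[\Gamma_p,G']\ne e$, some generator has a $p$-part acting nontrivially on $G'$; since $A^\ast$ is abelian and $G'\le A^\ast$, that $p$-part cannot lie in $A^\ast$, and because $\gcd(|A^\ast|,[G:A^\ast])=1$ the Sylow $p$-subgroup of $G$ cannot be split between $A^\ast$ and its complement, forcing the full $p$-contribution $p^{e_p}$ into $[G:A^\ast]$ --- this is exactly where the coprimality hypothesis defining $\Sp$ does its work. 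For a prime $p$ that is inactive but still contributes to $b$ through the $\Sigma$-loop, the commutation failure exhibited in Steps~20--30 again produces an obstruction that no abelian $A^\ast$ can absorb. Summing these local lower bounds gives $[G:A^\ast]\ge b$ for every valid $A^\ast$, hence $\gamma(G)=b$; combined with the previous paragraph, $(\langle U\rangle,\langle v\rangle)$ is a standard decomposition of $G$. I expect the delicate points to be (a) the bookkeeping that the orders of the elements accumulated in $V$ are as claimed, so that $v$ really generates a complement of $\langle U\rangle$, and (b) the lower bound $v_p([G:A^\ast])\ge e_p$ for active primes.
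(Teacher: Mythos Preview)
Your running-time analysis and structural setup (Schur--Zassenhaus, $A=G'\times C_A(y)$, the Sylow dichotomy between primes dividing $|A|$ and primes dividing $m$) are sound, and the verification that $|v|=b$ and that $(\langle U\rangle,\langle v\rangle)\in\Dd_G^{b}$ follows the same outline as the paper. The real divergence is in how you handle minimality, and there you make life much harder than necessary.

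You fix an \emph{arbitrary} witnessing decomposition $G=A\langle y\rangle$ and then try to prove $\gamma(G)\ge b$ by bounding $v_p([G:A^\ast])$ from below for every admissible $A^\ast$, prime by prime. This works for the ``active'' primes (those with $[\Gamma_p,G']\neq e$), but your treatment of $\Sigma$-primes sent to $V$ (``the commutation failure \ldots\ produces an obstruction that no abelian $A^\ast$ can absorb'') is not a proof: an element $w$ sent to $V$ at Step~23 is characterised only by $zwz^{-1}\notin\langle w\rangle$ for one particular $z\in\Sigma$, and turning that into a lower bound on $v_p([G:A^\ast])$ valid for \emph{every} abelian normal $A^\ast$ with coprime cyclic quotient is genuinely delicate. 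The paper sidesteps this entirely by fixing a \emph{standard} decomposition $(A,\langle y\rangle)$ from the outset, so that $m=\gamma(G)$ by definition. One then shows, via Lemma~\ref{lemma:primepower}, that every element placed in $V$ is a $p$-element for some prime $p\mid m$; since $\gcd(|A|,m)=1$, the Sylow $p$-subgroup of $G$ has order equal to the $p$-part of $m$, so each such element has order dividing $m$ and hence $b\mid m$. Combined with $\gamma(G)\le b$ (from the decomposition just built), this yields $b=m=\gamma(G)$ in one line. Your direct lower-bound strategy is not wrong in principle, but it manufactures an obstacle that the paper's choice of reference decomposition dissolves.

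For the $\Sigma$-loop itself, your ``careful case check'' would have to reproduce the content of Lemma~\ref{lemma:pair}: if $w,z\in\Sigma$ with $[w,z]\neq e$, then exactly one of them lies in $A$, and the test $zwz^{-1}\in\langle w\rangle$ determines which. Without this lemma (or an equivalent), the claim that the elements sent to $U$ at Step~23 lie in $A$ --- hence keep $\langle U\rangle$ abelian and keep $b\mid m$ --- is unjustified.
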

Before giving a complete proof of Theorem \ref{th:correctness} in Subsection \ref{sub_proof}, we first describe its outline below,
which we believe is also instructive in that it describes what procedure $\proc{Decompose}$ actually does.

Suppose that $(A,\gen{y})$ is a standard decomposition of $G$ with $\abs{y}=m$. This decomposition is unknown, and the value of $m$ too. 
Suppose that $\kappa=p_1^{e_1}\cdots p_r^{e_r}$ where the $p_i$'s are distinct prime numbers.
The first thing that is done is to convert the set of generators of $G$ into a set $\Gamma=\cup_{i=1}^r \Gamma_i$ of generators of prime powers
(where each $\Gamma_i$ consists of elements of order $p_i^{k_i}$ with $0\le k_i\le e_i$). 

The idea of the procedure is then to construct two sets: a set $U$ which will contain generators of $A$ and a set $V$ which will contain elements of prime power order 
of the form $ay^\alpha$ with $a\in A$ and $\alpha\not\equiv 0\bmod m$. 
More precisely, most elements of $\Gamma$ can be assigned to either $U$ or $V$ using simple rules (from the properties of groups in the class $\Sp$): 
If the order of an element $g$ of $\Gamma$ is not coprime with $\abs{G'}$, then $g$ should be put in $U$ (Step 9);  
If at least two elements of $\Gamma$ are in the same subset $\Gamma_i$ but do not define a cyclic subgroup (up to elements in the commutator subgroup),
 then they both should be put in $U$ (Step 14);
If an element $g$ of $\Gamma$ does not commute with all the elements of $G'$, then $g$ should be put in $V$ (Step 18; for technical reasons, only one element satisfying this condition from each $\Gamma_i$ 
is put in $V$).

It remains to deal with 
the set $\Sigma$ of elements satisfying neither of these three conditions. For elements $w\in \Sigma$ not commuting with at least one element $z$ in $\Sigma$,
deciding whether $w$ should be put in $U$ or in $V$ can be done by checking whether $zwz^{-1}\in\gen{w}$ or not (Steps 22 and 23). 
The last part of the procedure (Steps 25 to 28) deals with the elements in $\Sigma$ commuting with all elements in $\Sigma$; 
these elements are put as far as possible in $U$ to make $\gen{U}$ as large as possible.

Finally, at Step 29, the product
of all the elements in $V$ is raised to some well chosen power in order to obtain
an element $v$ such that $\gen{v}\cap \gen{U}=\triv$. 
It can be shown that $(\gen{U},\gen{v})$ is then a standard decomposition of $G$.

\subsection{Proof of Theorem \ref{th:correctness}}\label{sub_proof}
We start with two lemmas.
\begin{lemma}\label{lemma:primepower}
Let $G$ be a group in the class $\Sp$
and suppose that $(A,\gen{y})\in\Dd_G^m$. 
Let $w=ay^\alpha$ be an element of $G$ with $a\in A$ and $\alpha\not\equiv 0 \bmod m$.
If the order of $w$ is a prime power, then $a\in G'$.
\end{lemma}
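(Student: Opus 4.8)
The plan is to use the structure of $\Sp$-groups: $G = A\langle y\rangle$ with $A$ normal abelian, $\langle y\rangle$ cyclic of order $m$, and $\gcd(|A|,m)=1$. The first step is to locate the derived subgroup. Since $A$ is abelian and normal and $G/A \cong \Int_m$ is abelian, we have $G' \le A$; in fact $G' = [A,\langle y\rangle]$, the subgroup of $A$ generated by all $a^{-1}y a y^{-1}$, which measures how far the conjugation action of $y$ is from trivial. Because $A$ decomposes (by coprimality considerations, or just by the structure of finite abelian groups) as a direct product of its Sylow subgroups, and $y$ acts on each of them, one gets a $y$-invariant decomposition $A = G' \times C$, where $C \le A$ is the subgroup on which $y$ acts trivially — equivalently $C = C_A(y)$, the centralizer of $y$ in $A$. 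The key point to establish (or invoke) is that $A = G' \times C_A(y)$ as a direct product of $y$-invariant subgroups; this is a standard fact about coprime action (Fitting's/Maschke-type decomposition for the action of the cyclic group $\langle y\rangle$ on the coprime-order abelian group $A$).

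Now take $w = a y^\alpha$ with $a \in A$ and $\alpha \not\equiv 0 \bmod m$, and assume $|w| = p^k$ is a prime power. Write $a = a' c$ with $a' \in G'$ and $c \in C_A(y)$ according to the above decomposition. The aim is to show $a' = a$, i.e. $c = e$. Consider the image $\bar w$ of $w$ in the quotient $G/G'$. Since $G' \le A$, this quotient is abelian, equal to $(A/G') \times \langle \bar y\rangle$ in the sense that $\bar A := A/G'$ is a normal (hence central) abelian subgroup of order coprime with $m$ and $\bar y$ has order dividing $m$; moreover $\bar A \cong C_A(y)$ via the projection, so $\bar w$ corresponds to the pair $(c, \alpha \bmod m)$ under $G/G' \cong C_A(y) \times \Int_m$ — here it is important that the action of $y$ on $A/G'$ is trivial, which is exactly why $G/G'$ is abelian and why this identification holds. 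Since $\bar w$ has order dividing $p^k$, and orders in a direct product are the lcm of the component orders, both $|c|$ (in $C_A(y)$, which has order dividing $|A|$) and the order of $\alpha$ in $\Int_m$ divide $p^k$.

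Here is where I expect the main obstacle, and the resolution uses coprimality crucially. We have $\gcd(|A|, m) = 1$, so $\gcd(|C_A(y)|, m) = 1$; since $\alpha \not\equiv 0 \bmod m$, the order of $\alpha$ in $\Int_m$ is some $d > 1$ dividing $m$. Both $|c|$ and $d$ divide $p^k$, so both are powers of $p$; but $|c|$ divides $|A|$ and $d$ divides $m$, and $\gcd(|A|,m)=1$ forces that $p$ cannot divide both $|A|$ and $m$. Since $d > 1$ is a power of $p$ dividing $m$, we get $p \mid m$, hence $p \nmid |A|$, hence $p \nmid |C_A(y)|$, so $|c|$ — a power of $p$ dividing $|C_A(y)|$ — must be $1$, i.e. $c = e$. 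Therefore $a = a' \in G'$, as claimed. I would present the argument in roughly this order: first set up $G' \le A$ and the $y$-invariant splitting $A = G' \times C_A(y)$; then pass to $G/G'$ and identify it with $C_A(y) \times \Int_m$; then run the order/coprimality argument on the image of $w$ to kill the $C_A(y)$-component. The one technical point deserving care is the claim that the projection $A \to A/G'$ restricts to an isomorphism on $C_A(y)$ and that $G/G' \cong C_A(y)\times \Int_m$ as groups (not merely as extensions) — this follows from $A = G'\times C_A(y)$ together with the fact that $\langle y\rangle$ acts trivially modulo $G'$, so that $\langle \bar y\rangle$ is a complement with the multiplication being direct.
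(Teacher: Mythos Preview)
Your proof is correct, but it takes a genuinely different route from the paper's. The paper argues by direct computation: writing $|w|=p^r$, it expands $(ay^\alpha)^{p^r}=x\,a^{p^r}\,y^{\alpha p^r}$ with $x\in G'$ (collecting commutators into $G'$), observes that this being $e$ forces $a^{p^r}\in G'$, notes that $p\mid m$ (since the image of $w$ in $G/A\cong\Int_m$ is nontrivial of $p$-power order) hence $\gcd(p^r,|A|)=1$, and concludes $a\in G'$ because raising to the $p^r$-th power is an automorphism of $A$.

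Your approach is more structural: you invoke the coprime-action (Fitting) decomposition $A=[A,y]\times C_A(y)=G'\times C_A(y)$, identify $G/G'\cong C_A(y)\times\Int_m$, and then kill the $C_A(y)$-component of $\bar w$ by the same coprimality/order argument. This is perfectly valid and arguably more conceptual, but it front-loads a nontrivial (if standard) lemma about coprime actions that the paper avoids entirely. The paper's computation is shorter and self-contained; your version makes the underlying structure explicit and would generalize more readily. Both routes converge on the same endgame: $p\mid m\Rightarrow p\nmid|A|\Rightarrow$ the ``$A$-part outside $G'$'' must be trivial.
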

\begin{proof}
If the order of $w$ is a prime power, then it 
is necessarily a prime power $p^r$ dividing $m$ since $\alpha\not\equiv 0 \bmod m$.
Now $e=(ay^\alpha)^{p^r}=xa^{p^r}y^{\alpha p^r}=xa^{p^r}$ where $x$ is some element in $G'$. Thus  $a^{p^r}\in G'\subseteq A$. Since $p^r$ is coprime with $\abs{A}$, 
we conclude that $a\in G'$.
\end{proof}

\begin{lemma}\label{lemma:pair}
Let $G$ be a group in the class $\Sp$ and
suppose that $(A,\gen{y})\in\Dd_G^m$. Let $\Sigma$ be a set of elements of $G$ of prime power order 
such that 
each element of $\Sigma$ has order coprime with $\abs{G'}$ and commutes with all the elements in $G'$.
Let $w$ and $z$ be two elements of $\Sigma$ such that $[w,z]\neq e$. Then
\begin{itemize}
\item[(1)]
if $zwz^{-1}\in\gen{w}$ then $w\in A$ and $z=ay^\alpha$ with $a\in A$ and $\alpha\not\equiv 0\bmod m$.
\item[(2)]
if $zwz^{-1}\notin\gen{w}$ then $z\in A$ and $w=ay^\alpha$ with $a\in A$ and $\alpha\not\equiv 0\bmod m$.
\end{itemize}
\end{lemma}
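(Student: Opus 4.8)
The plan is to derive, from the hypotheses of the lemma, the single structural statement: \emph{necessarily $w\notin A$ and $z\in A$; writing $w=ay^\alpha$ with $a\in A$ and $0\le\alpha<m$ one has $a\in G'$ and $\alpha\not\equiv 0\bmod m$; and $zwz^{-1}\notin\gen{w}$.} Both items then follow at once: item (2) because its conclusion holds unconditionally under the stated hypotheses, and item (1) because its hypothesis $zwz^{-1}\in\gen{w}$ is never satisfied (so the implication is vacuous).

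I would start with two preliminary observations. First, since $\gcd(\abs{A},m)=1$ and $\gen{y}$ has order $m$, we have $A\cap\gen{y}=\triv$, so every element of $G=A\gen{y}$ is uniquely of the form $ay^\alpha$ with $a\in A$ and $0\le\alpha<m$, such an element lying in $A$ exactly when $\alpha\equiv 0\bmod m$; also $G/A$ is cyclic, hence abelian, so $G'\le A$ and $G'$ is abelian. Second --- the key fact --- \emph{any $a\in A$ whose order is a prime power $p^k$ with $p\nmid\abs{G'}$ commutes with $y$}: indeed $a$ lies in the unique (hence characteristic) Sylow $p$-subgroup $A_p$ of $A$; conjugation by $y$ preserves $A_p$ since $A_p$ is characteristic in $A$ and $A\nor G$, so $ya^{-1}y^{-1}\in A_p$ and therefore $[a,y]=a\cdot(ya^{-1}y^{-1})\in A_p$; but $[a,y]\in G'$, so $[a,y]$ lies in $A_p\cap G'$, a group of order dividing $\gcd(p^k,\abs{G'})=1$, whence $[a,y]=e$.

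Next I would establish that exactly one of $w,z$ lies in $A$. They cannot both lie in $A$, since $A$ is abelian while $[w,z]\neq e$. If both lie outside $A$, write $w=a_wy^{\alpha_w}$ and $z=a_zy^{\alpha_z}$ with $\alpha_w,\alpha_z\not\equiv 0\bmod m$; Lemma \ref{lemma:primepower} gives $a_w,a_z\in G'$, and since $a_w\in G'$ commutes with everything in the abelian group $G'$ and $w$ commutes with $G'$ by hypothesis, $y^{\alpha_w}=a_w^{-1}w$ commutes with $G'$, in particular with $a_z$; symmetrically $y^{\alpha_z}$ commutes with $a_w$, and a one-line computation then gives $wz=a_wa_zy^{\alpha_w+\alpha_z}=a_za_wy^{\alpha_z+\alpha_w}=zw$, contradicting $[w,z]\neq e$. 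So exactly one of them lies in $A$. To rule out $w\in A$: then $z\notin A$, so $z=a_zy^{\alpha_z}$ with $a_z\in G'\le A$ and $\alpha_z\not\equiv 0\bmod m$ by Lemma \ref{lemma:primepower}; since $w$ has prime-power order coprime with $\abs{G'}$, the key fact yields $y^{\alpha_z}wy^{-\alpha_z}=w$, so $zwz^{-1}=a_z(y^{\alpha_z}wy^{-\alpha_z})a_z^{-1}=a_zwa_z^{-1}=w$ (as $a_z,w\in A$ and $A$ is abelian), i.e. $[w,z]=e$ --- a contradiction. Therefore $w\notin A$ and $z\in A$; writing $w=a_wy^{\alpha_w}$, Lemma \ref{lemma:primepower} gives $a_w\in G'\le A$ and $\alpha_w\not\equiv 0\bmod m$, which is the conclusion asserted in item (2). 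Finally, if $zwz^{-1}\in\gen{w}$, say $zwz^{-1}=w^k$, then $w^{k-1}=zwz^{-1}w^{-1}=[z,w]\in\gen{w}\cap G'$; but $\abs{w}$ is a prime power coprime with $\abs{G'}$, so $\gen{w}\cap G'=\triv$, forcing $w^{k-1}=e$, hence $zwz^{-1}=w$ and $[w,z]=e$ --- again a contradiction. Both items follow.

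The step I expect to be the main obstacle is finding and proving the key fact: that coprimality of the order with $\abs{G'}$ already forces an element of $A$ to be centralized by $y$, via the triviality of $A_p\cap G'$. Everything else is a short manipulation in the normal form $ay^\alpha$, combined with Lemma \ref{lemma:primepower} and the coprimality hypotheses, which are used precisely to annihilate the $p$-subgroups of $G'$ that would otherwise appear.
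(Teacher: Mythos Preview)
Your proof is correct, and the core mechanism matches the paper's: in every case the coprimality of $\abs{w}$ with $\abs{G'}$ forces the relevant commutator $[z,w]$ to be trivial, which yields the contradiction. The organization is somewhat different, however. The paper proves (1) and (2) as two separate contrapositive arguments (assuming $z\in A$ for (1) and $w\in A$ for (2)) and does not remark that the assumption $w\in A$ already contradicts $[w,z]\neq e$ outright; you make this explicit and thereby prove the sharper structural statement that one \emph{always} has $z\in A$, $w\notin A$, and $zwz^{-1}\notin\gen{w}$, rendering item (1) vacuous. Your ``key fact'' that a prime-power-order element of $A$ with order coprime to $\abs{G'}$ is centralized by $y$ (via $A_p\cap G'=\triv$, using that $A_p$ is characteristic in $A$) is a clean alternative to the paper's direct computation at the same step, which writes $zwz^{-1}=[z,w]w$ inside the abelian group $A$ and compares orders to kill $[z,w]$. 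Your detailed treatment of the ``both outside $A$'' case also fills in what the paper leaves implicit in its appeal to Lemma~\ref{lemma:primepower}.
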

\begin{proof}
Since $[w,z]\neq e$,  at least one of $w$ and $z$ is of the form $ay^\alpha$ with $a\in A$ and $\alpha\not\equiv 0\bmod m$.
Lemma \ref{lemma:primepower} shows that exactly  one among $w$ and $z$ is of this form, while the other is in $A$
(remember that the elements $w$ and $z$ commute with all the elements in $G'$). 

Let us first prove assertion (1). Suppose that $z\in A$ (and thus necessarily $w=ay^\alpha$ with $a\in A$ and $\alpha\not\equiv 0\bmod m$). 
If $zwz^{-1}\in \gen{w}$, then $[z,w]=(zwz^{-1})w^{-1}$ is in $\gen{w}$ too. 
Since the order of $w$ is necessarily coprime with $|G'|$ (remember that $\abs{w}$ is a prime power and thus divides $m$), 
we conclude that $[z,w]=e$. This gives a contradiction. Thus, if $zwz^{-1}\in \gen{w}$, then $w\in A$.

We now prove assertion (2).
Suppose that $w\in A$ (and thus necessarily $z=ay^\alpha$ with $a\in A$ and $\alpha\not\equiv 0\bmod m$).
Then $zwz^{-1}$ is also in $A$. More precisely, $zwz^{-1}=[z,w]w$.
From the the observation that $zwz^{-1}$ has the same order as $w$ and the fact that $gcd(\abs{w},\abs{G'})=1$, 
we conclude that $[z,w]=e$ and that $zwz^{-1}\in\gen{w}$. Thus, if $zwz^{-1}\not\in \gen{w}$, then $z\in A$.
\end{proof}

We now proceed with the proof of  Theorem \ref{th:correctness}.

\begin{proof}[Proof of Theorem \ref{th:correctness}]
The complexity of Procedure $\proc{Decompose}$ follows from the description of the procedure given in Subsection \ref{sub_descr}. 
It remains to prove its correctness.

Let $(A,\gen{y})$ be a standard decomposition of $G$ with $\abs{y}=m$.
Notice that each call to the quantum algorithms solving the tasks mentioned in Theorem \ref{tasks} realized in the Procedure $\proc{Decompose}$ has success 
probability at least $1-1/\poly(\abs{G})$.
Then, with high probability, there is no failure at those steps. In the following we suppose that this is the case and show that, then, the procedure necessarily outputs a standard decomposition of
$G$.  

First, notice that the sets $\Gamma_i$ constructed in the loop of Steps 5 to 19 are such that $G=\gen{\cup_{i=1}^r\Gamma_i}$.
Moreover, they satisfy the following property:
If $p_i$ divides $m$, then $\gen{\Gamma_i}G'=\gen{y^{m/p_i^{e_i}},G'}$ from Lemma \ref{lemma:primepower}; If $p_i$ does not divides $m$, then
$\gen{\Gamma_i}G'=A_{p_i}G'$, where $A_{p_i}$ denotes the Sylow $p_i$-subgroup of $A$ (since, in this case,
the $\abs{G}/{p_i^{e_i}}$-th power of an element $ay^\alpha$ of $G$ is $xa^{\abs{G}/{p_i^{e_i}}}$ where $x$ is an element of $G'$). 

At the end of the loop of Steps 5 to 19, the set $U\cup V\cup \Sigma$ is a generating set of $G$ (here the fact that $G'\subseteq \gen{U}$ is important).
More precisely, the set $U$ contains only elements of $A$. The set $V$ contains only elements of the 
form $ay^{\alpha_i m/p_i^{e_i}}$ for some $i\bound{1}{r}$ such that $p_i$ divides $m$, 
where $a\in G'$ (from Lemma \ref{lemma:primepower}) and $\alpha_i$ is an integer such that $gcd(\alpha_i,p_i)=1$.
Moreover there is at most one element of this form in $V$ for each $i\bound{1}{r}$ such that $p_i$ divides $m$. 
The set $\Sigma$ is a set of elements 
satisfying the conditions of Lemma \ref{lemma:pair}.

In the loop of Steps 20 to 24, all the elements $w\in \Sigma$ such that $[w,\Sigma]\neq \triv$ are put in either $U$ or $V$.
From Lemma \ref{lemma:primepower} and Lemma \ref{lemma:pair}, the elements put in $U$ are elements of $A$ and the elements put in $V$ are of the form 
$w=ay^{\alpha}$ for some $a\in G'$ and some $\alpha\not\equiv 0\bmod m$.
At the end of the loop, the elements of $\Sigma\backslash (U\cup V)$ are commuting with all the elements of $\Sigma$.

Finally,  the loop of Steps 25 to 28 ensures that all the elements of $\Sigma\backslash(U\cup V)$ are put in either
$U$ or $V$ in the following way. The new elements put in $U$ are precisely those commuting with the original set $U$ (since these new elements also commute together,
the final subgroup $\gen{U}$ will then be abelian).
The elements put in $V$ are such that, at the end of the loop, $V$ contains again only elements of the 
form $ay^{\alpha_i m/p_i^{e_i}}$ with $a\in G'$ and $gcd(\alpha_i,p_i)=1$ for some $i\bound{1}{r}$ such that $p_i$ divides $m$. 
Moreover there is at most one element of this form in $V$ for each such $i$ (from the construction of the set $\Sigma$). This latter observation implies that
the element $z$ constructed at Step 29 is  such that $\gen{z}G'=\gen{V}G'$.

The final subgroup $\gen{U}$ is abelian and, since $G'\subseteq \gen{U}$,
is normal in $G$.
Since $\gen{z}G'=\gen{V}G'$,
we know that $\gen{z,U}=G$ (remember that $G'\subseteq \gen{U}$). 
The element $v$ constructed at Step 29 is 
of the form $ay^\alpha$, with $a\in G'$ and $\alpha$ coprime with $m$, and then 
$\gen{v,U}=G$, but $v$ satisfies the additional relation $v^b=e$.
Since $\gen{U}$ is abelian and each element of $U$ has order coprime with $\abs{v}$,
we conclude that $gcd(\abs{v},\abs{\gen{U}})=1$.
Thus $\gen{v}\cap\gen{U}=\triv$. 

This shows that the output $(U,v)$ of Procedure $\proc{Decompose}$ is such that that 
$(\gen{U},\gen{v})\in\Dd^{m'}_G$ where $m'=\abs{v}\le m$ (more precisely, $\abs{v}$ divides $m$ by construction).  
Since $m$ is the minimal integer such that $\Dd_G^m\neq\varnothing$ 
(because $(A,\gen{y})$ is a standard decomposition of $G$), 
we conclude that $m=m'$ and that Procedure $\proc{Decompose}$ finds a standard decomposition of the group $G$.
\end{proof}

\section{Set Discrete Logarithm}\label{section:set}

\subsection{Statement of the problem}
We first introduce the following useful notation.
Let $\field$ be a finite field, and $\Sigma=\{x_1,\ldots,x_t\}$ be any subset of $\field$ with possible repetitions, i.e.,~all the $x_i$'s are elements of $\field$, but may not be distinct.
For any integer $k$, we denote by $\Sigma^k$ the subset of $\field$ with possible repetitions $\{x_1^k,\ldots,x_t^k\}$.  

In this section we consider the following problem. Here $u$ is a positive integer which is a parameter of the problem 
(taking $u\ge 2$ does not make the problem significantly harder, but this enables us to give a more convenient presentation of 
our results). 

\begin{codebox}
$\proc{Set Discrete Logarithm}$\\\vspace{-5mm}
\zi\const{input:} two lists  $(S_1,\ldots,S_u)$ and $(T_1,\ldots,T_u)$ where, for each integer $h\bound{1}{u}$, 
$S_h$ and $T_h$  
\zi \hspace{20mm} 
are subsets with possible repetitions of some finite field $\field_h$.
\zi\const{output:} a positive integer $k$ such that $T_h^k= S_h$ 
for all $h\bound{1}{u}$, if such an integer exists.\vspace{0mm}
\end{codebox}
Notice that the case $u=1$ with $\abs{S_1}=\abs{T_1}=1$ is the usual discrete logarithm problem over the multiplicative group of the field $\field_1$.
Actually, our algorithm solving the problem  $\proc{Set Discrete Logarithm}$ will only need the multiplicative structure of the fields,
and then also works if we replace  in the definition each field $\field_h$ by any multiplicative finite group $G_h$. 
However, since the main applications of our algorithm deal with field structures (as described in Section \ref{section:conjugacy} and Section \ref{section_algorithm}), 
we  describe our results in the present slightly less general form. 

Given an instance of $\proc{Set Discrete Logarithm}$, let $m_{S}$ denote the smallest positive integer such that $x^{m_{S}}=1$ for all $x\in S_1\cup\cdots\cup S_u$,
and let $m_{T}$ denote the smallest positive integer such that $y^{m_{T}}=1$ for all $y\in T_1\cup\cdots\cup T_u$. 
The main result of this section is the following theorem.

\begin{theorem}\label{theorem_set}
There exists a quantum algorithm that solves with high probability the problem $\proc{Set Discrete}$ $\proc{Logarithm}$, and runs in time polynomial 
in $u$, $\log (m_S+m_T)$, and $\max_{1\le h\le u}(\abs{S_h}+\abs{T_h}+\log\abs{\field_h})$.
\end{theorem}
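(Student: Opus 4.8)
The plan is to reduce \proc{Set Discrete Logarithm} to a sequence of abelian Hidden Subgroup Problem instances, exploiting the fact that the set of valid exponents $k$ forms a union of cosets of a subgroup of $\Int_{M}$ for a suitable modulus $M$. First I would introduce $M_T = m_T$ (or rather, fix one generator-level description): observe that since each $y\in T_1\cup\cdots\cup T_u$ satisfies $y^{m_T}=1$, the map $k\mapsto T_h^k$ depends only on $k\bmod m_T$, so it suffices to search for a solution $k\bound{1}{m_T}$; if $m_S\neq m_T$ we can reject immediately when no solution can exist (a quick necessary condition). The key structural observation is that the set $K=\{k\in\Int_{m_T}\st T_h^k=S_h\text{ for all }h\}$ of solutions, \emph{if nonempty}, is a coset of the stabilizer subgroup $\operatorname{Stab}=\{k\in\Int_{m_T}\st T_h^k=T_h\text{ for all }h\}$ (as multisets). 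So the problem splits into (a) computing $\operatorname{Stab}$ and (b) finding one representative of the solution coset.

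For step (a), I would realize $\operatorname{Stab}$ as a hidden subgroup. Define $f:\Int_{m_T}\to X$ (for a suitable finite set $X$ encoding $u$-tuples of multisets over the fields) by $f(k)=(T_1^k,\ldots,T_u^k)$. This $f$ is constant on cosets of $\operatorname{Stab}$ and distinct on distinct cosets, hence it is a valid abelian HSP instance over the cyclic group $\Int_{m_T}$; evaluating $f$ costs $\poly(u,\sum_h(\abs{T_h}+\log\abs{\field_h}))$ field exponentiations, and by Theorem \ref{tasks} / the remarks on the abelian HSP we obtain generators for $\operatorname{Stab}$ in time $\poly(\log m_T)$ times the evaluation cost. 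Knowing $\operatorname{Stab}$, $\Int_{m_T}/\operatorname{Stab}$ is a cyclic (in fact abelian) group of known order $d=m_T/\abs{\operatorname{Stab}}$, and $f$ descends to an injective map on it.

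For step (b) — finding one solution — I would use a discrete-log-style argument on the quotient. Having a basis for $\operatorname{Stab}$ and hence an explicit isomorphism $\Int_{m_T}/\operatorname{Stab}\cong\Int_d$, the function $\bar f$ on $\Int_d$ is injective, so I need to invert it at the target point $(S_1,\ldots,S_u)$. The clean way is another HSP: on the group $\Int_{m_T}\times\Int_{m_T}$ (or $\Int_d\times\Int_d$), consider $g(a,b)=(T_1^a S_1^{-b},\ldots,T_u^a S_u^{-b})$ interpreted appropriately as multisets — wait, multisets don't support this directly, so instead I would go componentwise and symmetric-function-ify: for each $h$, replace the multiset $T_h=\{y_1,\ldots,y_v\}$ by the single field element $\prod_i(z-y_i)\in\field_h[z]$, i.e. work with elementary symmetric functions; but exponentiation $k$ does not act linearly on symmetric functions, so this needs care. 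The cleaner route, which I would actually carry out: pick \emph{one} element, say $y_1\in T_1$ with $y_1\neq 0$, of order $\ell\mid m_T$; run Shor's discrete-log algorithm to find, for each $x_j\in S_1$, an exponent $k_j$ with $y_1^{k_j}=x_j$ (these exist iff a solution does, on this coordinate); then each candidate solution $k$ satisfies $k\equiv k_j\pmod{\ell}$ for some $j$, and must lie in the coset structure from step (a). Since $\operatorname{Stab}\le\{k:y_1^k=y_1\}$ has index divisible by... hmm, here the footnote's warning bites: there can be exponentially many lifts $k'\equiv k_j\bmod\ell$. The resolution: $\operatorname{Stab}$ from step (a) already captures \emph{all} ambiguity, so $k$ is determined modulo $\abs{\operatorname{Stab}}$ up to finitely many cosets indexed by $\Int_{m_T}/(\operatorname{Stab}\cdot\langle\ell\rangle)$; I would argue this index is polynomially bounded, or better, iterate: intersect the constraint from coordinate $y_1$ with that from a second element $y_2$, etc. Concretely, set $L_h^{(j)}=\{k:T_h^k=S_h\text{ matched via the }j\text{-th matching}\}$ and take intersections of arithmetic-progression constraints across all elements of all $T_h$; each is a coset of a subgroup of $\Int_{m_T}$, their intersection is computed by lattice/CRT reduction in $\poly(\log m_T)$ time, and at most $\prod_h\abs{T_h}!$ matchings — too many. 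So the genuinely clean solution is: (b) is itself a single HSP over $\Int_{m_T}\times\operatorname{Stab}$-complement — specifically, once $d=m_T/\abs{\operatorname{Stab}}$ is known and $d$ is found to be $\poly$-smooth or simply $\poly(\log m_T)$ in size, enumerate; otherwise, set up the function $h(k)=(T_1^k,\ldots,T_u^k)$ restricted to coset representatives of $\operatorname{Stab}$ and invert it by one more call to the abelian HSP solver viewing "find $k$ with $f(k)=\text{target}$" as hidden-shift, solvable because $f$ is injective modulo $\operatorname{Stab}$.

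I expect step (b), inverting the injective quotient map to recover an actual exponent $k$ rather than merely the stabilizer, to be the main obstacle — precisely the phenomenon flagged in the footnote: local discrete logs only pin down $k$ modulo the order of individual elements, and reconciling these across the multiset can a priori blow up. The way to defuse it rigorously is to phrase the \emph{whole} inversion as one abelian HSP / hidden-shift instance over $\Int_{m_T}$ (with $f(k)=(T_1^k,\dots,T_u^k)$ as the oracle and $(S_1,\dots,S_u)$ as the shifted target), so that the quantum subroutine returns the entire solution coset in one shot with the claimed running time $\poly(u,\log(m_S+m_T),\max_h(\abs{S_h}+\abs{T_h}+\log\abs{\field_h}))$, the dominant cost being the $\poly$-many field multiplications needed to evaluate $f$ at a superposition of exponents.
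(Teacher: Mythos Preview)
Your step (a) is correct and matches the paper: the stabilizer $\operatorname{Stab}=\{k: T_h^k=T_h\text{ for all }h\}$ is the hidden subgroup for the function $k\mapsto(T_1^k,\ldots,T_u^k)$ (encoded by sorting each multiset), and the abelian HSP solver finds it in the stated time.

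Step (b), however, has a genuine gap. None of the approaches you float actually closes. Enumerating matchings is exponential, as you note. Your final suggestion, to treat ``find $k$ with $f(k)=(S_1,\ldots,S_u)$'' as an abelian hidden-shift instance, does not give a polynomial-time algorithm: the abelian hidden shift problem (equivalently, the dihedral HSP) is \emph{not} known to be solvable in polynomial time --- Kuperberg-type algorithms are only subexponential --- and the standard abelian HSP machinery returns subgroups, not preimages of a designated target point. So as written the proposal does not achieve the claimed running time.

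The idea you are missing, and which the paper uses, is to \emph{partition each $S_h$ and $T_h$ by the multiplicative order of their elements}. Write $S_{h,i}$, $T_{h,i}$ for the sub-multisets of elements of order exactly $m_i$. The point is that within one such class every element has the \emph{same} order $m_i$, so fixing a single $x_1\in S_{h,i}$ and running the ordinary discrete-log algorithm for each $y_j\in T_{h,i}$ yields at most $|T_{h,i}|$ candidate exponents $\alpha_j\in\Int_{m_i}^\ast$, each \emph{uniquely} determined modulo $m_i$; you then simply test which (if any) satisfies $T_{h,i}^{\alpha_j}=S_{h,i}$. This is exactly the footnote's obstacle dissolved: the ambiguity in the discrete log modulo $|y_1|$ disappears once $|y_1|=m_i$ is shared by the whole class. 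Each class thus produces a coset $K_{h,i}\subseteq\Int_m^\ast$ (one explicit representative plus its stabilizer, the latter again via HSP). Finally, the global solution set is $\bigcap_{h,i}K_{h,i}$, and intersecting cosets of subgroups of the abelian group $\Int_m^\ast$ is itself a polynomial-time abelian-HSP computation (the paper isolates this as a separate proposition). That two-stage decomposition --- first by order class, then coset intersection --- is what replaces your hidden-shift step.
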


\subsection{Proof of Theorem \ref{theorem_set}}
We first describe how to compute intersections of cosets of abelian groups efficiently using a quantum computer.
\begin{proposition}\label{coset_int}
Let $\Gamma$ be an abelian group, given as a black-box, and $\Gamma_1,\Gamma_2$ be two subgroups of $\Gamma$ given by generating sets. 
Let $x$ and $y$ be two elements of $\Gamma$.
There exists a quantum algorithm that decides with high probability, in time polynomial in $\log\abs{\Gamma}$, whether $x\Gamma_1\cap y\Gamma_2$ is empty.
Moreover, when the algorithm decides that $x\Gamma_1\cap y\Gamma_2\neq \varnothing$,
it also outputs an element $\gamma\in\Gamma$, and $t=O(\log\abs{\Gamma})$ elements $\gamma_1,\ldots,\gamma_t$
such that  $x\Gamma_1\cap y\Gamma_2=\gamma\gen{\gamma_1,\ldots,\gamma_t}$ with high probability. 
\end{proposition}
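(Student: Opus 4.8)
The plan is to reduce the coset-intersection problem to an instance of the abelian Hidden Subgroup Problem (HSP) over $\Gamma \times \Gamma \times \Int_2$, or more simply to a combination of an HSP computation and one membership test, all of which are solvable in time polynomial in $\log\abs{\Gamma}$ by Theorem~\ref{tasks} and the results cited in Subsection~\ref{sub:clBB}. First I would handle the ``direction-finding'' part: the set $x\Gamma_1 \cap y\Gamma_2$ is nonempty if and only if $y^{-1}x \in \Gamma_1\Gamma_2$, since $x\gamma_1 = y\gamma_2$ for some $\gamma_i\in\Gamma_i$ is equivalent to $y^{-1}x = \gamma_2\gamma_1^{-1} \in \Gamma_2\Gamma_1 = \Gamma_1\Gamma_2$ (the last equality because $\Gamma$ is abelian). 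So the emptiness question is exactly a constructive membership question: decide whether $y^{-1}x$ lies in the subgroup $\Gamma_1\Gamma_2 = \gen{\Gamma_1 \cup \Gamma_2}$ and, if so, express it as a product. This can be done with Task~(ii) and Task~(iii) of Theorem~\ref{tasks}: compute a basis of the abelian group $\gen{\Gamma_1 \cup \Gamma_2}$, then attempt to decompose $y^{-1}x$ over that basis and verify the decomposition (as in the discussion following Figure~\ref{figure:procedure}, the verification step is needed because the decomposition algorithm may behave arbitrarily on non-members). If $y^{-1}x \notin \Gamma_1\Gamma_2$, output ``empty''.

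Once we know the intersection is nonempty, I would produce a witness $\gamma$ and a generating set of the ``stabilizer'' subgroup. For the witness: from the membership test we obtained $y^{-1}x = ab$ with $a \in \Gamma_1$, $b\in\Gamma_2$ — more precisely, having a basis of $\gen{\Gamma_1\cup\Gamma_2}$ is not by itself enough to split the element across $\Gamma_1$ and $\Gamma_2$, so instead I would run the membership test directly against the group $\Gamma_1$ extended by generators of $\Gamma_2$ in a way that keeps track of which generators are used. A cleaner route: consider the homomorphism $\phi : \Gamma_1 \times \Gamma_2 \to \Gamma$, $(\gamma_1,\gamma_2)\mapsto x\gamma_1(y\gamma_2)^{-1} \cdot$ (with a fixed base point), whose image is the coset $x\Gamma_1 (y\Gamma_2)^{-1}$; the pairs mapping to $e$ form the subgroup of $\Gamma_1\times\Gamma_2$ on which $x\gamma_1 = y\gamma_2$. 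Concretely, define $f : \Gamma_1\times\Gamma_2 \to \Gamma$ by $f(\gamma_1,\gamma_2) = x\gamma_1\gamma_2^{-1}y^{-1}$; this is a well-defined function (not a homomorphism, but it is constant on cosets of $K = \{(\gamma_1,\gamma_2) : \gamma_1 = \gamma_2^{-1}\cdot(\text{correction})\}$) — rather than fight this, I would take $f(\gamma_1,\gamma_2) = x\gamma_1 \cdot (y\gamma_2)^{-1}$, which *is* a homomorphism from $\Gamma_1\times\Gamma_2$ to $\Gamma$ only after translating, so I would instead just use $f(\gamma_1,\gamma_2) = \gamma_1\gamma_2$ with the shift $y^{-1}x$ absorbed: solve HSP for $f : \Gamma_1\times\Gamma_2\to\Gamma$, $f(\gamma_1,\gamma_2)=\gamma_1\gamma_2$, whose hiding subgroup is $L = \{(\gamma_1,\gamma_2): \gamma_1\gamma_2 = e\} \cong \Gamma_1\cap\Gamma_2$; then given a preimage $(a,b)$ of $y^{-1}x$ obtained from Task~(iii) applied to the basis of $\gen{\Gamma_1\cup\Gamma_2}$ (augmented so one reads off the $\Gamma_1$- and $\Gamma_2$-parts), set $\gamma = xa^{-1} = yb$; this lies in $x\Gamma_1\cap y\Gamma_2$, and $x\Gamma_1\cap y\Gamma_2 = \gamma(\Gamma_1\cap\Gamma_2)$, so $\gamma_1,\ldots,\gamma_t$ are the generators of $\Gamma_1\cap\Gamma_2$ output by the HSP algorithm, with $t = O(\log\abs{\Gamma})$ since every subgroup of $\Gamma$ has a generating set of that size.

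The remaining technical point is computing $\Gamma_1\cap\Gamma_2$, which I would do by exactly the HSP above: the function $f(\gamma_1,\gamma_2)=\gamma_1\gamma_2^{-1}$ on $\Gamma_1\times\Gamma_2$ hides the subgroup $\{(\gamma,\gamma):\gamma\in\Gamma_1\cap\Gamma_2\}$, and projecting its generators to the first coordinate gives generators of $\Gamma_1\cap\Gamma_2$; this is a black-box abelian HSP instance over $\Gamma_1\times\Gamma_2$, solvable in time $\poly(\log\abs{\Gamma_1\times\Gamma_2}) = \poly(\log\abs{\Gamma})$. All the correctness facts are elementary coset arithmetic in an abelian group: nonemptiness $\Leftrightarrow$ $y^{-1}x\in\Gamma_1\Gamma_2$; and when nonempty, the intersection is a single coset of $\Gamma_1\cap\Gamma_2$.

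The main obstacle is bookkeeping rather than conceptual: one must make sure that the membership/decomposition step does not merely certify $y^{-1}x \in \gen{\Gamma_1\cup\Gamma_2}$ but actually returns an explicit splitting $y^{-1}x = ab$ with $a\in\Gamma_1$, $b\in\Gamma_2$, so that the coset representative $\gamma = xa^{-1}$ can be named; this is handled by applying Task~(iii) to a generating set that is the concatenation of the given generators of $\Gamma_1$ and of $\Gamma_2$ (refined to a basis by Task~(ii) while remembering expressions of the basis elements in terms of the original generators), and then rewriting the resulting exponent vector back into a $\Gamma_1$-part and a $\Gamma_2$-part. One also needs the verification step after each application of Task~(iii), since the algorithm may behave arbitrarily on non-members; because all component algorithms succeed with probability $1 - 1/\poly(\abs{\Gamma})$, a union bound over the $O(1)$ invocations gives overall success probability $1 - 1/\poly(\abs{\Gamma})$, which is ``high probability'' as required.
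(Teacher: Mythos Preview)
Your overall strategy matches the paper's: nonemptiness is equivalent to $xy^{-1}\in\Gamma_1\Gamma_2$; when nonempty the intersection is a coset of $\Gamma_1\cap\Gamma_2$; and $\Gamma_1\cap\Gamma_2$ is obtained from an abelian HSP on $\Gamma_1\times\Gamma_2$ with hiding function $(\gamma_1,\gamma_2)\mapsto\gamma_1\gamma_2^{\pm1}$ (the paper uses the product, you use the quotient --- both hide the diagonal copy of $\Gamma_1\cap\Gamma_2$, and projecting the generators to the first coordinate works exactly as you say). The only substantive divergence is how you produce the explicit coset representative $\gamma$.

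Here your argument has a real, if minor, gap. You propose to decompose $y^{-1}x$ over a basis of $\gen{\Gamma_1\cup\Gamma_2}$ via Task~(iii) and then ``rewrite the exponent vector back into a $\Gamma_1$-part and a $\Gamma_2$-part'', relying on having ``remembered expressions of the basis elements in terms of the original generators''. But Task~(ii) as stated only promises to output basis elements as black-box strings, not as words in the input generators, so this rewriting step is not supported by the toolkit you invoke; you would have to argue separately that the Cheung--Mosca procedure can be instrumented to track such expressions. The paper avoids this entirely with a clean trick: after computing bases $(\alpha_i)$ of $\Gamma_1$ and $(\beta_j)$ of $\Gamma_2$, it runs a single HSP over $P_1=\prod_i\Int_{|\alpha_i|}\times\prod_j\Int_{|\beta_j|}\times\Int_{|xy^{-1}|}$ with hiding function $f_1(a_1,\ldots,b_t,c)=\alpha_1^{a_1}\cdots\alpha_s^{a_s}\beta_1^{b_1}\cdots\beta_t^{b_t}x^{-c}y^{c}$. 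The hidden subgroup $Q_1$ contains a tuple with last coordinate $1$ exactly when $xy^{-1}\in\Gamma_1\Gamma_2$, and from any such tuple one reads off $\gamma=x\alpha_1^{-a_1}\cdots\alpha_s^{-a_s}\in x\Gamma_1\cap y\Gamma_2$ directly. Thus the emptiness test and the explicit splitting into a $\Gamma_1$-part and a $\Gamma_2$-part fall out of one HSP call, with no need for the bookkeeping you worry about.
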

\begin{proof}
A standard result of group theory states that the set $x\Gamma_1\cap y\Gamma_2$ is either empty, or is a coset of the subgroup $\Gamma_1\cap \Gamma_2$
(note that this statement is true even if $\Gamma$ is not abelian).
Notice that $x\Gamma_1\cap y\Gamma_2\neq \varnothing$ if and only if $xy^{-1}\in \Gamma_1\Gamma_2$.
This can be checked efficiently using the quantum algorithm by Ivanyos et al.~\cite{Ivanyos+03} testing membership in abelian groups, but more
work is needed to find an explicit element in $x\Gamma_1\cap y\Gamma_2$.

Let $\{\alpha_1,\ldots,\alpha_s\}$ and $\{\beta_1,\ldots,\beta_t\}$ be bases of $\Gamma_1$ and $\Gamma_2$ respectively.
Define the abelian group $P_1=\Int_{\abs{\alpha_1}}\times\cdots\times\Int_{\abs{\alpha_s}}\times\Int_{\abs{\beta_1}}\times\cdots\times\Int_{\abs{\beta_t}}\times\Int_{\abs{xy^{-1}}}$ 
and define the map $f_1$ from $P_1$ to $\Gamma$ as follows: for any $(a_1,\ldots,a_s,b_1,\ldots,b_t,c)$ in $P_1$,
$$
f_1(a_1,\ldots,a_s,b_1,\ldots,b_t,c)=\alpha_1^{a_1}\cdots\alpha_s^{a_s}\beta_1^{b_1}\cdots\beta_t^{b_t}x^{-c}y^{c}.
$$
Notice that the set
$
Q_1=\{(a_1,\ldots,a_s,b_1,\ldots,b_t,c)\in P_1\st x^cy^{-c}=\alpha_1^{a_1}\cdots\alpha_s^{a_s}\beta_1^{b_1}\cdots\beta_t^{b_t}\}
$
is a subgroup of $P_1$, and that the function $f_1$ is constant on cosets of $Q_1$ in $P_1$, with distinct values on distinct cosets. 
This is thus an instance of the abelian HSP, and a set of generators of $Q_1$ can be found in time polynomial in $\log\abs{P_1}=O(\log\abs{\Gamma})$.
The set  $x\Gamma_1\cap y\Gamma_2$ is not empty if and only if $Q_1$ contains some element of the form $(a_1,\ldots,a_s,b_1,\ldots,b_t,1)$, 
in which case the element $\gamma=x\alpha_1^{-a_1}\cdots\alpha_s^{-a_s}$ is in $x\Gamma_1\cap y\Gamma_2$.

We now show how to compute a generating set of the subgroup $\Gamma_1\cap\Gamma_2$. This can be done using the
quantum algorithm by Friedl et al.~\cite{Friedl+STOC03} computing the intersection of subgroups in ``smoothly solvable'' groups, but we present 
here a much simpler quantum algorithm for the abelian case, inspired by techniques developed in \cite{McKenzie+SICOMP87}.
Let $f_2$ be the map from the abelian group $P_2=\Int_{\abs{\alpha_1}}\times\cdots\times\Int_{\abs{\alpha_s}}\times\Int_{\abs{\beta_1}}\times\cdots\times\Int_{\abs{\beta_t}}$ 
to $K_1K_2$ defined as follows:  for any $(a_1,\ldots,a_s,b_1,\ldots,b_t)$ in $P_2$,
$$
f_2(a_1,\ldots,a_s,b_1,\ldots,b_t)=\alpha_1^{a_1}\cdots\alpha_s^{a_s}\beta_1^{b_1}\cdots\beta_t^{b_t}.
$$
Notice that the set
$
Q_2=\{(a_1,\ldots,a_s,b_1,\ldots,b_t)\in P_2\st \alpha_1^{a_1}\cdots\alpha_s^{a_s}\beta_1^{b_1}\cdots\beta_t^{b_t}=1\}
$
is a subgroup of $P_2$, and that the function $f_2$ is constant on cosets of $Q_2$ in $P_2$, with distinct values on distinct cosets. 
This is thus an instance of the abelian HSP, and a set of generators $\{z_1,\ldots,z_r\}$ of $Q_2$ with $r=\log\abs{\Gamma}$  
can be found in time polynomial in $\log\abs{P_2}=O(\log\abs{\Gamma})$ using the algorithm described in Subsection \ref{sub:clBB} .
For each $i\bound{1}{r}$ let us write $z_i=(u_{i1},\ldots,u_{is},v_{i1},\ldots,v_{it})$ and define $\gamma_i=\alpha_1^{u_{i1}}\cdots\alpha_s^{u_{is}}$. 
Then it is easy to check that $\Gamma_1\cap\Gamma_2=\gen{\gamma_1,\ldots,\gamma_r}$. 
\end{proof}

We are now ready to give our proof of Theorem \ref{theorem_set}.
\begin{proof}[Proof of Theorem \ref{theorem_set}]
For the sake of brevity, let us denote $\Sigma=S_1\cup\cdots\cup S_u\cup T_1\cup\cdots\cup T_u$.
We first compute the orders of all the elements in $\Sigma$ using Shor's algorithm \cite{ShorSICOMP97}.
The value $m_S$ is the least common multiple of the orders of all the elements in $S_1\cup\cdots\cup S_u$, and 
the value $m_T$ is the least common multiple of the orders of all the elements in $T_1\cup\cdots\cup T_u$.
The values $m_S$ and $m_T$ can then be computed in time polynomial in $\log (m_S+m_T)$, $\abs{\Sigma}$, and $\max_{1\le h\le u}\log\abs{\field_h}$.
Notice that, for any positive integer $k$, the least common multiple of the orders of all the elements in $T_1^k\cup\cdots\cup T_u^k$ is $m_T/gcd(k,m_T)$.
Then, if $m_S$ does not divide $m_T$, then there is no solution to the problem $\proc{Set Discrete Logarithm}$. 
If $m_S$ divides $m_T$ but $m_S\neq m_T$, then a solution (if it exists) can be found by replacing the list $(T_{1},\ldots,T_u)$ by the list $(T_1^{m_T/m_S},\ldots,T_u^{m_T/m_S})$. 
Thus, without loss of generality, we 
suppose hereafter that $m_S=m_T$ and denote by $m$ this value. Then a solution $k$ can be searched for in the set $\Int_m^\ast$.

Let $\{m_{1},\ldots,m_{\ell}\}=\cup_{z\in\Sigma}\{\abs{z}\}$ denote the set of orders of the elements in $\Sigma$.
For each $h\bound{1}{u}$ and each $i\bound{1}{\ell}$, we define the subsets
$$
S_{h,i}=\{x\in S_h\:\vert\:\abs{x}=m_{i}\} \textrm{ and } T_{h,i}=\{y\in T_h\:\vert\:\abs{y}=m_{i}\}. 
$$
Let us also define the sets 
$$
K_{h,i}=\{k\in\Int_m^\ast\st T_{h,i}^k=S_{h,i}\} \textrm{ and }
\overline{K}_{h,i}=\{k\in\Int_m^\ast\st T_{h,i}^k=T_{h,i}\}.
$$
It is straightforward to check that the set $\overline{K}_{h,i}$ is a subgroup of $\Int_m^\ast$, 
and that  the set  $K_{h,i}$ is either empty, or is a coset of  $\overline{K}_{h,i}$  in $\Int_m^\ast$. 

Let $K\subseteq \Int_m^\ast$ denote the set of solutions of the instance of $\proc{Set Discrete Logarithm}$ we are considering. Then
$$
K=\bigcap_{1\le h\le u}\Big(\bigcap_{1\le i\le \ell}K_{h,i}\Big).
$$
The set $K$ can be computed efficiently by applying successively 
the quantum algorithm of Proposition \ref{coset_int}
if, for each $h\bound{1}{u}$ and each $i\bound{1}{\ell}$, the set $K_{h,i}$ is known 
(more precisely, if a generating set of $\overline{K}_{h,i}$ and an element of $K_{h,i}$ are known).

The final part of the proof shows how to compute these sets $K_{h,i}$.
Let us fix an integer $h\bound{1}{u}$ and an integer $i\bound{1}{\ell}$. We suppose that $S_{h,i}$ and $T_{h,i}$ have the same size (otherwise $K_{h,i}=\varnothing$ and thus $K=\varnothing$).
Denote $S_{h,i}=\{x_1,\ldots,x_{v}\}$ and $T_{h,i}=\{y_1,\ldots,y_{v}\}$, where $v=\abs{S_{h,i}}$ 
depends on $h$ and $i$. We present a quantum procedure computing
a set of generators of $\overline{K}_{h,i}$, and an element
$k_{h,i}$ in $K_{h,i}$ when this set is not empty, in time polynomial in $v$, $\log m$, and $\log\abs{\field_h}$.

We first show how to compute the subgroup $\overline{K}_{h,i}$.
Let $\prec$ be an arbitrary strict total ordering of the elements of 
$\field_h$. 
Without loss of generality we can suppose that $x_1\preceq x_2\preceq\cdots\preceq x_{v}$. 
Let $\mu$ be the function from $\Int_m^\ast\times\{1,\ldots,v\}$ to $\field_h$ defined as follows: 
for any $k\in\Int_m^\ast$ and any $j\bound{1}{v}$, $\mu(k,j)$ is the $j$-th element (with respect to the order $\prec$)
of the set $T_{h,i}^k$.
Let $f$ be the function from $\Int_m^\ast$ to $(\field_h)^{v}$ such that, for any $k\in\Int_m^\ast$:
$$f(k)=(\mu(k,1)y_1^{-1},\ldots,\mu(k,v)y_{v}^{-1}).$$ 
Notice that the set $\{k\in\Int_m^\ast\st f(k)=(1,\ldots,1)\}$ is precisely the subgroup $\overline{K}_{h,i}$ of $\Int_m^\ast$.
Moreover, the function $f$ is constant on cosets of $\overline{K}_{h,i}$ in $\Int_m^\ast$, with distinct values on distinct cosets
(since $f(k_1)=f(k_2)$ implies that $T_{h,i}^{k_1}=T_{h,i}^{k_2}$ and thus $k_1\in k_2 \overline{K}_{h,i}$).
This is thus an instance of the abelian HSP, and a set of generators of $\overline{K}_{h,i}$ can be found in time polynomial in
$v$, $\log m$ and $\log\abs{\field_h}$ using the algorithm described in Subsection \ref{sub:clBB} 
(notice that the underlying group is $\Int_m^\ast$, and that the value of the function $f$ can be computed in time $v$, $\log m$ and $\log\abs{\field_h}$). 

We now show how to compute an element $k_{h,i}$ in $K_{h,i}$ if this set is not empty. 
We first try to find an element $\alpha\in\Int_{m_i}^\ast$ such that $T_{h,i}^{\alpha}=S_{h,i}$.
This is done by, for each $j\bound{1}{v}$, trying to find an integer $\alpha_j\in\Int_{m_{i}}^\ast$ such that
$x_{1}^{\alpha_j}=y_{j}$, if such an integer exists (notice that, for each $j$, there is at most one element $\alpha_j$ in $\Int_{m_{i}}^\ast$ satisfying this condition,
which can be computed in time polynomial in $\log m_i$ and $\log\abs{\field_h}$ using the quantum algorithm for the standard discrete logarithm problem \cite{ShorSICOMP97}) and checking whether
$T_{h,i}^{\alpha_j}=S_{h,i}$.  
If no such value $\alpha$ can be found, we conclude that $K_{h,i}$ is empty.
Otherwise we take any such value $\alpha$ and compute $k_{h,i}$ as follows.
Let us write the prime power decomposition of $m$ as 
$m=p_1^{\epsilon_1}\cdots p_r^{\epsilon_r}{p'}_1^{\eta_1}\cdots {p'}_{s}^{\eta_s}q_1^{\delta_1}\cdots q_t^{\delta_t}$,
where each prime $p_l$ divides $m_{i}$ for $l\bound{1}{r}$, each prime $p'_l$ divides $\alpha$ but not $m_{i}$ for $l\bound{1}{s}$, 
and each prime $q_l$ divides neither $m_{i}$ nor $\alpha$ for $l\bound{1}{t}$.
Then the integer
$$
k_{h,i}=\alpha+m_{i}q_1^{\delta_1}\cdots q_t^{\delta_t} \bmod m
$$
is coprime with $m$ (since $\alpha$ is coprime with $m_i$ and then each prime $p_l$, $p'_l$ or $q_l$ does not divide $k_{h,i}$), 
and hence is in $\Int_m^\ast$. From the choice of $\alpha$ and since any element in $T_{h,i}$ has order $m_{i}$, we conclude that 
$k_{h,i}$ is in the set~$K_{h,i}$.
\end{proof}

\section{Discrete Logarithm up to Conjugacy}\label{section:conjugacy}

\subsection{Statement of the problem}
Given a positive integer $r$ and a finite field $\field$, remember that $GL(r,\field)$ denotes the multiplicative group of invertible matrices of
size $r\times r$ with entries in $\field$. In this section we consider the following problem. Here $u$ is again a positive integer which is a parameter of the problem.

\begin{codebox}
$\proc{Discrete Log up to Conjugacy}$\\\vspace{-5mm}
\zi\const{input:} two lists of matrices  $(M^{(1)}_1,\ldots,M^{(u)}_1)$ and $(M^{(1)}_2,\ldots,M^{(u)}_2)$ where, for each integer  
\zi \hspace{12mm} $h\bound{1}{u}$, $M_1^{(h)}$ and $M_2^{(h)}$ are in $GL(r_h,\field_h)$ for some positive integer $r_h$ and some
\zi \hspace{12mm}  finite field $\field_h$.
\zi\const{output:} a positive integer $k$  and $u$ matrices $M^{(h)}\in GL(r_h,\field_h)$ such that 
\zi \hspace{15mm}  $M^{(h)}\cdot M^{(h)}_1= [M^{(h)}_2]^k\cdot M^{(h)}$  for each $h\bound{1}{u}$, if such elements exist.\vspace{0mm}
\end{codebox}
In the statement of the above problem, the notation $[M^{(h)}_2]^k$ simply means $M^{(h)}_2$ raised to the $k$-th power.
Notice that the case $u=1$ and $r_1=1$ is basically the usual discrete logarithm problem over the multiplicative group of the finite field $\field_1$.

Let $m_1$ and $m_2$ denote the smallest positive integers such that $[M_1^{(h)}]^{m_1}=I$ and $[M_2^{(h)}]^{m_2}=I$ for all $h\bound{1}{u}$.
The main result of this section is the following theorem. 
 \begin{theorem}\label{theorem:conjugacy}
 There exists a quantum algorithm that solves with high probability the problem $\proc{Discrete Log}$ $\proc{up to Conjugacy}$, and runs in time polynomial in $u$, $\log (m_1+m_2)$, and  $\max_{1\le h\le u}(r_h+\log\abs{\field_h})$
 \end{theorem}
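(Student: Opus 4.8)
The plan is to reduce $\proc{Discrete Log up to Conjugacy}$ to an instance of $\proc{Set Discrete Logarithm}$, and then invoke Theorem~\ref{theorem_set}. The starting point is the theory of invariant factors and elementary divisors recalled in Subsection~\ref{sub:inv}: two matrices in $GL(r,\field)$ are conjugate if and only if they have the same set of elementary divisors over a common splitting field $\fieldK$, and the elementary divisors of a matrix of the form $(M)^k$ are obtained from those of $M$ by raising the eigenvalues to the $k$-th power (while leaving the Jordan block sizes unchanged, since $J(\lambda,c)^k$ has minimal polynomial $(x-\lambda^k)^c$ as long as $\mathrm{char}(\field)$ does not divide $k$ --- and we may restrict to $k$ coprime with $m_1+m_2$ without loss of generality, as the eigenvalues are roots of unity of order dividing $m_1+m_2$). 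So for each $h$, the condition ``$M^{(h)}_1$ is conjugate to $[M^{(h)}_2]^k$'' becomes: the multiset of eigenvalues of $M^{(h)}_1$, grouped by Jordan block size, equals the $k$-th powers of the multiset of eigenvalues of $M^{(h)}_2$, grouped by block size.

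First I would, for each $h\bound{1}{u}$, compute the rational normal form (equivalently, the invariant factors) of $M^{(h)}_1$ and $M^{(h)}_2$ over $\field_h$; this can be done classically in time polynomial in $r_h$ and $\log\abs{\field_h}$. From the invariant factors I extract the minimal polynomials, compute a common splitting field $\fieldK_h$ of the minimal polynomials of $M^{(h)}_1$ and $M^{(h)}_2$ (a field extension of $\field_h$ of degree at most $r_h^2$, so of size $\poly$ in the input parameters in logarithm), and factor the invariant factors over $\fieldK_h$ to read off the elementary divisors. Each elementary divisor is a pair $(\lambda, c)$ with $\lambda\in\fieldK_h$ and $c$ a block size. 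If the multiset of block sizes appearing for $M^{(h)}_1$ differs from that for $M^{(h)}_2$, there is no solution. Otherwise, for each block size $c$ that occurs, let $S_{h,c}\subseteq\fieldK_h$ be the multiset of eigenvalues of $M^{(h)}_1$ with block size exactly $c$, and $T_{h,c}$ the analogous multiset for $M^{(h)}_2$; then $M^{(h)}_1\sim [M^{(h)}_2]^k$ over $\fieldK_h$ (hence over $\field_h$, since conjugacy is insensitive to field extension) if and only if $T_{h,c}^k = S_{h,c}$ for every occurring $c$. Collecting these $(S_{h,c},T_{h,c})$ over all $h$ and all block sizes $c$ yields an instance of $\proc{Set Discrete Logarithm}$ with at most $u\cdot\max_h r_h$ pairs, each consisting of multisets of size at most $r_h$ inside a field $\fieldK_h$ of size $\abs{\field_h}^{O(r_h^2)}$, and with $m_S, m_T$ dividing the orders of $M^{(h)}_1, M^{(h)}_2$, hence bounded by a polynomial in $m_1+m_2$ after routine bookkeeping (the eigenvalues are roots of unity whose orders divide $m_1$ or $m_2$). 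Theorem~\ref{theorem_set} then produces, in time polynomial in $u$, $\log(m_1+m_2)$, and $\max_h(r_h+\log\abs{\field_h})$, an integer $k$ with $T_{h,c}^k = S_{h,c}$ for all $h,c$, if one exists.

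It remains to recover, for this $k$, explicit conjugating matrices $M^{(h)}$ with $M^{(h)} M^{(h)}_1 = [M^{(h)}_2]^k M^{(h)}$ for each $h$. Since $M^{(h)}_1$ and $[M^{(h)}_2]^k$ are now known to be conjugate in $GL(r_h,\field_h)$ (they have the same rational normal form, which we can compute), a conjugating matrix can be found classically in time polynomial in $r_h$ and $\log\abs{\field_h}$: compute a change-of-basis matrix $P_1$ bringing $M^{(h)}_1$ to its rational normal form $R$ and a change-of-basis matrix $P_2$ bringing $[M^{(h)}_2]^k$ to $R$, and set $M^{(h)} = P_2^{-1}P_1$. (Computing $[M^{(h)}_2]^k$ itself requires a fast modular exponentiation of matrices, which is polynomial in $r_h$, $\log\abs{\field_h}$ and $\log k \le \log m_2$.) Finally, output $k$ together with the matrices $M^{(h)}$.

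The main obstacle I expect is twofold and largely bookkeeping: first, handling the multiplicities correctly --- the elementary divisors form a multiset, so one must carefully track how many times each eigenvalue occurs at each block size, and ensure that the sets $S_{h,c}, T_{h,c}$ passed to $\proc{Set Discrete Logarithm}$ are the multisets with the right repetitions (this is exactly why the problem statement of $\proc{Set Discrete Logarithm}$ allows subsets ``with possible repetitions''); second, controlling the parameters of the reduced instance --- verifying that $\max_c\abs{S_{h,c}} \le r_h$, that the splitting fields $\fieldK_h$ have size at most $\abs{\field_h}^{O(r_h)}$ so $\log\abs{\fieldK_h} = O(r_h\log\abs{\field_h})$, and that the relevant exponents $m_S, m_T$ are $\poly(m_1+m_2)$, so that Theorem~\ref{theorem_set} indeed runs within the claimed bound. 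The genuinely quantum content is entirely funnelled into the call to Theorem~\ref{theorem_set} (itself built on the abelian HSP and Shor's algorithms); every other step is classical linear algebra over finite fields.
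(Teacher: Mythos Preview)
Your overall strategy—reduce to $\proc{Set Discrete Logarithm}$ via elementary divisors, then invoke Theorem~\ref{theorem_set}—is exactly the paper's, and your handling of the Jordan-block bookkeeping and of the final recovery of the conjugating matrices is fine. But there is a genuine gap in the complexity analysis, and it is precisely the issue the paper flags just before Definition~\ref{definition:sets}.

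You assert that the common splitting field $\fieldK_h$ of the two minimal polynomials has degree ``at most $r_h^2$'' over $\field_h$ (and later ``$O(r_h)$''). This is false. Over $GF(q)$, the splitting field of a polynomial whose irreducible factors have degrees $d_1,\ldots,d_s$ is $GF\bigl(q^{\mathrm{lcm}(d_1,\ldots,d_s)}\bigr)$; subject only to $\sum_i d_i\le r_h$, that lcm can be as large as Landau's function $g(r_h)=e^{(1+o(1))\sqrt{r_h\ln r_h}}$, which is superpolynomial in $r_h$. Hence even representing a single element of $\fieldK_h$ may require $e^{\Omega(\sqrt{r_h\log r_h})}\log\abs{\field_h}$ bits, and both your factorization step over $\fieldK_h$ and the call to Theorem~\ref{theorem_set} (whose running time depends on $\log\abs{\fieldK_h}$) blow up. The paper circumvents this by \emph{not} passing to a single splitting field: for each degree $d$ of an irreducible factor over $\field_h$ it works in $GF(q_h^{\,d})$ separately (so $\log$ of the field size is $d\log\abs{\field_h}\le r_h\log\abs{\field_h}$), and accordingly groups eigenvalues by the pair $(d,\ell)$ rather than by block size $\ell$ alone—this is the content of Definition~\ref{definition:sets} and Lemma~\ref{claim1}. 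Once you make that change (and first normalize to $m_1=m_2=m$ and search over $k\in\Int_m^\ast$, as the paper does, rather than your ad hoc ``$k$ coprime with $m_1+m_2$'', which is not the right coprimality condition), your argument goes through and coincides with the paper's.
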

 
 \subsection{Proof of Theorem \ref{theorem:conjugacy}}
 The quantum algorithm solving the problem $\proc{Discrete Log up to}$ $\proc{Conjugacy}$ follows from a reduction to the problem $\proc{Set Discrete Logarithm}$.
 The key idea is to represent each matrix by its set of elementary divisors.  
  We will first introduce some definitions 
 and prove two lemmas before moving to the proof of Theorem \ref{theorem:conjugacy}. 
 In this subsection we use the notations introduced in Subsection \ref{sub:inv}.

Let $M$ be a matrix in $GL(r,\field)$, where $r$ is a positive integer and $\field$ is a finite field.
The minimal polynomial $a_s(x)$ of $M$ has not in general all its roots in $\field$, and, 
in order to define the elementary divisors of $M$, we need then to work
on a field extension of $\field$ containing all the roots of $a_s(x)$. 
Denote $\field=GF(q)$ where $q$ is some prime power. It is well known that the roots of any irreducible factor of degree $d$ of a polynomial in $\field[x]$ are elements of the
field extension $GF(q^d)$ of $\field$ (see \cite{Lidl+08} for example).
Then the field extension $GF(q^{d'})$ splits the polynomial $a_s(x)$, where $d'$ denotes the least common multiple
of the degrees of the irreducible factors of $a_s(x)$ over $\field$. 
However, the value $d'$ can be in general superpolynomial in $r$, and thus we need to be more careful to obtain an algorithm with running time 
polynomial in $r$ and $\log \abs{\field}$. This is why we introduce the following 
definition (we also take in consideration the degrees of the associated elementary divisors for technical reasons).

\begin{definition}\label{definition:sets}
Let $M$ be a matrix in $GL(r,\field)$ where $r$ is a positive integer and $\field=GF(q)$ is a finite field of prime power order $q$, 
and let $d$ and $\ell$ be two positive integers.
Suppose that $\{(x-\lambda_1)^\ell,\ldots,(x-\lambda_t)^\ell\}$ is the subset of all elementary divisors of degree $\ell$ of $M$  such that 
each $\lambda_i$ is an element in $GF(q^d)$ but is not in any proper subfield of $GF(q^d)$.
Then we define $\Sigma_{d,\ell}(M)$ as the subset of $GF(q^d)$ with possible repetitions $\{\lambda_1,\ldots,\lambda_t\}$.
\end{definition}
\noindent{\bf Example}. Define the two polynomials $f_1=(x^2+x+1)$ and $f_2=(x^2+x+1)^2(x^3+x+1)$ over $GF(2)$, and the matrix $M=diag(C_1,C_1,C_2)$ 
where $C_1$ (resp.~$C_2$) denotes the companion matrix of $f_1$ (resp.~$f_2$). Notice that $x^2+x+1$ and $x^3+x+1$ are irreducible over $GF(2)$.
The matrix $M$ has size $11\times 11$, consists of 3 diagonal 
blocks of size $2\times 2$, $2\times 2$ and $7\times 7$ respectively, and is actually already in rational normal form.
In particular, its invariant factors are $(f_1,f_1,f_2)$. 
Then the minimal polynomial of $M$ is $f_2$, which is split by $GF(2^6)$. 
It can be checked that there exist two elements $\alpha_2\in GF(2^2)$ and $\alpha_3\in GF(2^3)$ 
of multiplicative order respectively 3 and 7 such that the polynomial $(x^2+x+1)$ factorizes into $(x-\alpha_2)(x-\alpha_2^2)$
over  $GF(2^2)$ and the polynomial $(x^3+x+1)$ factorizes into $(x-\alpha_3)(x-\alpha_3^2)(x-\alpha_3^4)$ over $GF(2^3)$. 
Then the set of elementary  divisors of $M$ is 
$\{(x-\alpha_2),(x-\alpha_2),(x-\alpha^2_2),(x-\alpha^2_2),(x-\alpha_2)^2,(x-\alpha^2_2)^2,(x-\alpha_3),(x-\alpha^2_3),(x-\alpha^4_3)\}$
and the only sets $\Sigma_{d,\ell}(M)$ that are not empty are 
$\Sigma_{2,1}=\{\alpha_2,\alpha_2,\alpha_2^2,\alpha_2^2\}$,
$\Sigma_{2,2}=\{\alpha_2,\alpha_2^2\}$ and
$\Sigma_{3,1}=\{\alpha_3,\alpha_3^2,\alpha_3^4\}$.\qed\vspace{2mm}

We will need the following result on Jordan matrices.
 \begin{lemma}\label{lemma:jordanpower}
Let $\lambda$ be a nonzero element in a finite field $\fieldK$ and $c$ be a positive integer. 
Let $k$ be a positive integer coprime with the multiplicative order of $J(\lambda,c)$. 
Then the set of elementary divisors of the matrix $[J(\lambda,c)]^k$ is 
$\{(x-\lambda^k)^c\}$.
\end{lemma}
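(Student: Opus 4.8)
The plan is to understand the action of raising a Jordan block to the $k$-th power and to show that, when $k$ is coprime with the multiplicative order of $J(\lambda,c)$, the resulting matrix is similar to a single Jordan block $J(\lambda^k,c)$. Since the set of elementary divisors of $J(\mu,c)$ is $\{(x-\mu)^c\}$ (as noted in Subsection \ref{sub:inv} via its minimal polynomial), it suffices to prove $[J(\lambda,c)]^k$ is conjugate in $GL(c,\fieldK)$ to $J(\lambda^k,c)$, or equivalently that its minimal polynomial is $(x-\lambda^k)^c$.

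First I would write $J(\lambda,c) = \lambda I + N$, where $N$ is the nilpotent matrix with $1$'s on the first superdiagonal, so $N^c = 0$ but $N^{c-1}\neq 0$. Expanding by the binomial theorem (the two summands commute), $[J(\lambda,c)]^k = \sum_{j=0}^{c-1}\binom{k}{j}\lambda^{k-j}N^j = \lambda^k I + k\lambda^{k-1}N + (\text{higher powers of }N)$. This is an upper triangular matrix with $\lambda^k$ on the diagonal, so its only eigenvalue is $\lambda^k$; writing it as $\lambda^k I + P$ with $P$ strictly upper triangular, its minimal polynomial is $(x-\lambda^k)^e$ where $e$ is the nilpotency index of $P$. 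The key point is then to show $P^{c-1}\neq 0$, i.e.\ $e=c$, which forces the matrix to be a single Jordan block of size $c$ (a nilpotent $c\times c$ matrix with nilpotency index $c$ is conjugate to $N$).

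The main obstacle, and the only place coprimality is used, is establishing $P^{c-1}\neq 0$. Here $P = k\lambda^{k-1}N + \binom{k}{2}\lambda^{k-2}N^2 + \cdots$, and $P^{c-1} = (k\lambda^{k-1})^{c-1}N^{c-1} + (\text{terms involving }N^c=0)$, since any product of $c-1$ factors each of which is a scalar times $N^{\ge 1}$, if even one factor contributes $N^{\ge 2}$, has total $N$-degree $\ge c$ and hence vanishes. Thus $P^{c-1} = (k\lambda^{k-1})^{c-1}N^{c-1}$, which is nonzero provided $k\lambda^{k-1}\neq 0$ in $\fieldK$. Since $\lambda\neq 0$ by hypothesis, this reduces to checking $k\neq 0$ in $\fieldK$, i.e.\ that the characteristic $p$ of $\fieldK$ does not divide $k$. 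This is where I invoke the coprimality hypothesis: I would argue that $p$ divides the multiplicative order of $J(\lambda,c)$ whenever $c\geq 2$. Indeed, for $c\geq 2$ the matrix $J(\lambda,c)$ is not diagonalizable (its minimal polynomial $(x-\lambda)^c$ is not separable over $\fieldK$ only when... ), so its order is not coprime to $p$; more concretely, $[J(\lambda,c)]^{|\lambda|\cdot p^{\lceil\log_p c\rceil}} = (\lambda I + N)^{|\lambda| p^{\lceil\log_p c\rceil}}$, and one checks the order is exactly $|\lambda|\cdot p^{\lceil\log_p c\rceil}$, which is divisible by $p$. Hence $k$ coprime with this order implies $p\nmid k$, giving $k\neq 0$ in $\fieldK$ and completing the argument. (The case $c=1$ is trivial: $[J(\lambda,1)]^k = (\lambda^k)$ has elementary divisor $\{(x-\lambda^k)\}$ with no constraint needed.)

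To summarize the steps in order: (1) reduce the claim to computing the minimal polynomial of $[J(\lambda,c)]^k$ and invoking uniqueness of Jordan form; (2) write $J(\lambda,c)=\lambda I+N$ and binomially expand to get $[J(\lambda,c)]^k = \lambda^k I + P$ with $P$ strictly upper triangular; (3) compute $P^{c-1} = (k\lambda^{k-1})^{c-1}N^{c-1}$ by a degree count in powers of $N$; (4) show $p\nmid k$ using that $p$ divides the multiplicative order of $J(\lambda,c)$ (for $c\geq 2$), so $P^{c-1}\neq 0$; (5) conclude that the nilpotent part has index exactly $c$, hence $[J(\lambda,c)]^k$ is similar to $J(\lambda^k,c)$, whose set of elementary divisors is $\{(x-\lambda^k)^c\}$. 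The delicate point requiring care is step (4), the divisibility of the order of a nontrivial Jordan block by the field characteristic.
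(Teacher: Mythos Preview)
Your proof is correct and follows essentially the same approach as the paper: write $J(\lambda,c)=\lambda I+N$, expand the $k$-th power, and show $(M^k-\lambda^k I)^{c-1}=(k\lambda^{k-1})^{c-1}N^{c-1}\neq 0$ by establishing $p\nmid k$. The one minor difference is in justifying that $p$ divides the order of $J(\lambda,c)$ for $c\ge 2$: where you compute the order explicitly as $|\lambda|\cdot p^{\lceil\log_p c\rceil}$, the paper argues more directly that since $M^m=I$ the first superdiagonal entry $m\lambda^{m-1}$ of $M^m$ must vanish, forcing $p\mid m$.
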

\begin{proof}
Let us write $M=J(\lambda,c)$ and denote by $p$ the characteristic of $\fieldK$. 
The result is trivial if $c=1$ so we suppose that $c\ge 2$.

Our proof is based on the simple fact that the $k$-th power of M is an upper triangular matrix
with $\lambda^k$ along the main diagonal, $k\lambda^{k-1}$ along the first superdiagonal, and possibly other nonzero entries
in the other superdiagonals if $c>2$ (the values of these entries are easy to calculate, but not relevant to this proof).
Let $m$ denote the multiplicative order of $M$. Then, since $M^m=I$ and $\lambda\neq 0$, we have $m\lambda^{m-1}=0$.
Then $p$ divides $m$.

Let $k$ be a positive integer coprime with $m$. Then $k$ is necessary coprime with $p$ from the above observation. 
Notice that  a matrix in $GL(c,\fieldK)$ has $\{(x-\lambda^k)^c\}$ as set of elementary divisors if and only if $(x-\lambda^k)^c$ is its minimal polynomial.
Since the characteristic polynomial of $M^k$ is $(x-\lambda^k)^c$, the minimal polynomial of $M^k$  divides $(x-\lambda^k)^c$.
We now show that $(M^k-\lambda^kI)^{c-1}\neq 0$.
From the description of $M^k$ given above, it is easy to show that $(M^k-\lambda^kI)^{c-1}$ is the matrix where the only nonzero entry 
is located at the first row and the $c$-th column. The value of this entry is $(k\lambda^{k-1})^{c-1}$. 
Since $k$ is coprime with $p$ and $\lambda\neq 0$, we conclude that $(M^k-\lambda^kI)^{c-1}\neq 0$. 
\end{proof}
Since two matrices are similar if and only if they have the same elementary divisors,
Lemma \ref{lemma:jordanpower} shows that a Jordan matrix raised to a power coprime with 
its order is similar to itself.
We now prove the following lemma (remember that, if $\Sigma=\{x_1,\ldots,x_t\}$ is a subset of $\field$ with possible repetitions, 
we denote by $\Sigma^k$ the subset of $\field$ with possible repetitions $\{x_1^k,\ldots,x_t^k\}$).
\begin{lemma}\label{claim1}
Let $M_1$ and $M_2$ be two matrices in $GL(r,\field)$, where $r$ denotes a positive integer and $\field$ denotes a finite field.
Let $m$ be an integer such that $M_1^m=M_2^m=I$, and $k$  be an integer in $\Int^\ast_{m}$. Then $M_1$ and $M_2^k$ are similar in 
$GL(r,\field)$ if and only if, for all positive integers $d$ and $\ell$, the equality
$[\Sigma_{d,\ell}(M_2)]^k=\Sigma_{d,\ell}(M_1)$ holds.
\end{lemma}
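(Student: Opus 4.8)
The plan is to exploit the fact (recalled in Subsection~\ref{sub:inv}) that two matrices in $GL(r,\field)$ are similar if and only if they have the same multiset of elementary divisors over a splitting field $\fieldK$, together with Lemma~\ref{lemma:jordanpower}, which computes what happens to a single Jordan block under a power coprime to its order. The sets $\Sigma_{d,\ell}(M)$ of Definition~\ref{definition:sets} are nothing but a bookkeeping device that partitions the multiset of elementary divisors of $M$ according to (a) the degree $d$ of the minimal extension $GF(q^d)$ in which the eigenvalue $\lambda$ lives and (b) the degree $\ell$ of the associated elementary divisor $(x-\lambda)^\ell$. So the statement ``$M_1$ and $M_2^k$ have the same elementary divisors'' is equivalent to ``$\Sigma_{d,\ell}(M_1)=\Sigma_{d,\ell}(M_2^k)$ for all $d,\ell$,'' and the content of the lemma is the identity $\Sigma_{d,\ell}(M_2^k)=[\Sigma_{d,\ell}(M_2)]^k$.

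First I would fix a field extension $\fieldK=GF(q^{d'})$ (with $d'$ the lcm of the degrees of the irreducible factors of the minimal polynomials of $M_1$ and $M_2$) that simultaneously splits the minimal polynomials of $M_1$, $M_2$, and hence also of $M_2^k$ (the eigenvalues of $M_2^k$ are $k$-th powers of those of $M_2$, so no larger field is needed). Now decompose $M_2$ over $\fieldK$ into its Jordan normal form $\mathrm{diag}(J(\mu_1,c_1),\ldots,J(\mu_\ell,c_\ell))$; then $M_2^k$ is similar to $\mathrm{diag}(J(\mu_1,c_1)^k,\ldots,J(\mu_\ell,c_\ell)^k)$. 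Since $M_2^m=I$, the order of each block $J(\mu_j,c_j)$ divides $m$, and as $k\in\Int_m^\ast$, the integer $k$ is coprime with each such order; Lemma~\ref{lemma:jordanpower} then gives that the set of elementary divisors of $J(\mu_j,c_j)^k$ is $\{(x-\mu_j^k)^{c_j}\}$. Taking the union over $j$, the multiset of elementary divisors of $M_2^k$ is obtained from that of $M_2$ by replacing each $(x-\mu_j)^{c_j}$ with $(x-\mu_j^k)^{c_j}$.

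Next I would check that this substitution respects the partition into the $\Sigma_{d,\ell}$'s, i.e.\ that an eigenvalue $\mu$ and its power $\mu^k$ lie in \emph{the same} minimal subfield. This is the one point that needs a small argument rather than pure bookkeeping: since $k$ is coprime with $m$ and $\mu^m=1$, the map $x\mapsto x^k$ is a bijection on the cyclic group $\gen{\mu}$, so $\mu$ is itself a power of $\mu^k$; hence $GF(q)(\mu)=GF(q)(\mu^k)$, and $\mu\in GF(q^d)\setminus\bigcup(\text{proper subfields})$ if and only if $\mu^k$ is. Also the exponent $\ell$ is clearly unchanged. Therefore the $(d,\ell)$-block of the elementary divisors of $M_2^k$ consists exactly of the $\ell$-th powers\footnote{Here I mean: the eigenvalues are raised to the $k$-th power; the exponent on $(x-\cdot)$ stays $\ell$.} ---of the $k$-th powers of the eigenvalues in $\Sigma_{d,\ell}(M_2)$, i.e.\ $\Sigma_{d,\ell}(M_2^k)=[\Sigma_{d,\ell}(M_2)]^k$ as multisets. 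Combining: $M_1\sim M_2^k$ in $GL(r,\field)$ $\iff$ $M_1$ and $M_2^k$ have the same elementary divisors over $\fieldK$ $\iff$ $\Sigma_{d,\ell}(M_1)=\Sigma_{d,\ell}(M_2^k)=[\Sigma_{d,\ell}(M_2)]^k$ for all $d,\ell$, which is the claim.

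The main obstacle, and really the only nontrivial step, is establishing $\Sigma_{d,\ell}(M_2^k)=[\Sigma_{d,\ell}(M_2)]^k$ with the correct multiplicities and the correct subfield assignment; once Lemma~\ref{lemma:jordanpower} is in hand this reduces to the bijectivity of $x\mapsto x^k$ on $m$-th roots of unity and the resulting equality of generated subfields. Everything else---that similarity is detected by elementary divisors, that a common splitting field exists, that $k$ is coprime to every block order because $M_2^m=I$---is quoted directly from Subsection~\ref{sub:inv} and the hypotheses.
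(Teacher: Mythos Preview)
Your proof is correct and follows essentially the same approach as the paper: both use the Jordan normal form together with Lemma~\ref{lemma:jordanpower} to identify the elementary divisors of $M_2^k$ as $\{(x-\mu_j^k)^{c_j}\}$, and both conclude by observing that $\mu$ and $\mu^k$ generate the same subfield since $k$ is coprime with the order of $\mu$. Your write-up is in fact more explicit than the paper's (which is quite terse), spelling out why $k$ is coprime with each block order and why the $\Sigma_{d,\ell}$ partition is preserved; the only cosmetic issue is that you use $\ell$ both for the number of Jordan blocks and for the degree index in $\Sigma_{d,\ell}$.
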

\begin{proof}
Let $\fieldK$ be a field extension of $\field$ splitting the minimal polynomial of $M_2$. 
Denote by $(x-\mu_1)^{v_1},\ldots,(x-\mu_s)^{v_s}$ 
the elementary divisors of $M_2$ (where the $\mu_i$'s are elements of $\fieldK$ that may not be distinct).
If $k$ is coprime with $m$, then Lemma \ref{lemma:jordanpower} implies (using the concept of the Jordan normal form) that the elementary divisors of $M_2^k$ are 
$(x-\mu_1^k)^{v_1},\ldots,(x-\mu_s^k)^{v_s}$.
Since two matrices are similar in $GL(r,\field)$ if and only if they have the same elementary divisors, the claim follows from the fact that, if $\fieldK_i$ is the smallest subfield of $\fieldK$ containing $\mu_i$, then $\fieldK_i$ is also the smallest
subfield of $\fieldK$ containing $\mu_i^k$ (since $k$ is coprime with the order of $\mu_i$).
\end{proof}

 We now present the proof of Theorem \ref{theorem:conjugacy}.

 \begin{proof}[Proof of Theorem \ref{theorem:conjugacy}]
Remember that $m_1$ and $m_2$ denote the minimal positive integers such that $[M_1^{(h)}]^{m_1}=I$ and $[M_2^{(h)}]^{m_2}=I$ for all $h\bound{1}{u}$.
Notice that, if $m_1$ does not divide $m_2$, then there is
no solution to the problem $\proc{Discrete Log up to Conjugacy}$. 
If $m_1$ divides $m_2$ but $m_1\neq m_2$, then a solution (if it exists) can be found by replacing each matrix $M_2^{(h)}$ by $[M_2^{(h)}]^{m_2/m_1}$. 
Thus, without loss of generality, we 
suppose hereafter that $m_1=m_2$ and denote by $m$ this value. Then a solution $k$ can be searched for in the set $\Int_m^\ast$.
 
Let us fix an integer $h\bound{1}{u}$ and suppose that $\field_h=GF(q_h)$, where $q_h$ is a some prime power.
We first compute the invariant factors over $\field_h$ of $M_1^{(h)}$ and $M_2^{(h)}$. This can be done
in $O({r_h}^3)$ field operations, using for example the algorithm by Storjohann \cite{StorjohannFOCS01}.
We then factor over $\field_h$ these invariant factors using the Cantor-Zassenhaus algorithm
\cite{Cantor+81}, running in time polynomial in $r_h$ and $\log\abs{\field_h}$.
Let us denote by $D_h$ the set of degrees of the irreducible factors (over $\field_h$) appearing in at least one of these invariant factors.
Notice that obviously $\abs{D}\le 2 r_h$ since each $M_1^{(h)}$ and $M_2^{(h)}$ has at most $r_h$ invariant factors. 
For each $d\in D_h$ and each integer $\ell\bound{1}{r_h}$, we compute the sets $\Sigma_{d,\ell}(M_1^{(h)})$ 
and $\Sigma_{d,\ell}(M_2^{(h)})$ defined in Definition \ref{definition:sets} as follows: 
the irreducible factors of degree $d$ of the invariant factors of $M_1^{(h)}$ and $M_2^{(h)}$ are factorized over $GF(q^d)$
using the Cantor-Zassenhaus algorithm \cite{Cantor+81}, and the elementary factors of degree $\ell$ are then collected.

Lemma \ref{claim1} implies that there exists a solution to the problem $\proc{Discrete Log up to Conjugacy}$ if and only if there exists some integer $k\in\Int_m^\ast$ such that 
$[\Sigma_{d,\ell}(M_2^{(h)})]^k=\Sigma_{d,\ell}(M_1^{(h)})$ for all integers $h\bound{1}{u}$, all integers $d\in D_h$ and all integers $\ell\bound{1}{r_h}$.
Such an integer $k$ (if it exists) can then be found with high probability using the quantum algorithm of Theorem \ref{theorem_set} in time polynomial in
$u$, $\log m$, and $\max_{1\le h\le u}(r_h+\log\abs{\field_h})$.

Finally, if such a solution $k$ exists, then, for each $h\bound{1}{h}$, a matrix $M^{(h)}\in GL(r_h,\field_h)$
such that $M^{(h)}M^{(h)}_1=[M^{(h)}_2]^kM^{(h)}$ can then be computed for this value of $k$ in time polynomial in $r_h$ and $\log\abs{\field_h}$ 
using efficient classical algorithms, for example the algorithm by Storjohann \cite{StorjohannFOCS01}. 
 \end{proof}

\section{Proof of Theorem \ref{theorem_main}}\label{section_algorithm}

We first state some technical results by Le Gall \cite{LeGallSTACS09} we use to prove Theorem \ref{theorem_main}.
We will first need  the following result from \cite{LeGallSTACS09} that shows necessary and sufficient conditions
for the isomorphism of two groups in the class~$\Sp$. 
\begin{proposition}[Proposition 5.1 in \cite{LeGallSTACS09}]\label{proposition_class}
Let $G$ and $H$ be two groups in $\Sp$.
Let $(A_1,\gen{y_1})$ and $(A_2,\gen{y_2})$ be standard decompositions
of $G$ and $H$ respectively and let $\varphi_1\in Aut(A_1)$ (resp.~$\varphi_2\in Aut(A_2)$) be 
the action by conjugation of $y_1$ on $A_1$ (resp.~of $y_2$ on $A_2$).
The groups $G$ and $H$ are isomorphic if and only if the following three conditions 
hold:
(i)
$A_1\cong A_2$; and 
(ii)
$\abs{y_1}=\abs{y_2}$; and
(iii)
there exists a positive integer $k$ and an isomorphism $\chi\colon A_1\to A_2$
such that $\varphi_1=\chi^{-1}\varphi_2^k\chi$, where $\varphi_2^k$ means $\varphi_2$ composed by itself $k$ times.
\end{proposition}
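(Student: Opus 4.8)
The plan is to recognise that a group in $\Sp$ equipped with a standard decomposition is nothing but a semidirect product, and then to decide when two such semidirect products are isomorphic. Since $\gcd(\abs{A_i},\abs{y_i})=1$ and $G=A_1\gen{y_1}$ (resp.\ $H=A_2\gen{y_2}$), the subgroup $A_i\cap\gen{y_i}$ has order dividing both coprime numbers and is therefore trivial, so with $m=\abs{y_1}$ I may identify $G\cong A_1\rtimes_{\varphi_1}\Int_m$ and $H\cong A_2\rtimes_{\varphi_2}\Int_{\abs{y_2}}$, where each element is written uniquely as $ay_i^s$ with $a\in A_i$ and the product law is $ay_i^s\cdot by_i^t=a\,\varphi_i^{s}(b)\,y_i^{s+t}$. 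Throughout I take the integer $k$ in (iii) to be coprime with $m$ (equivalently $k\in\Int_m^\ast$, exactly as in Lemma~\ref{claim1}); this is the relevant regime and it is genuinely needed, since without it the map built below is not injective and the statement fails (one sees this already by comparing the centers of $A\rtimes_{\varphi}\Int_m$ and $A\rtimes_{\varphi^2}\Int_m$ when $\varphi$ has order $4$).

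For the ``if'' direction I would assume (i), (ii), (iii), fix an isomorphism $\chi\colon A_1\to A_2$ and $k\in\Int_m^\ast$ with $\varphi_1=\chi^{-1}\varphi_2^k\chi$, and define the candidate map
$$\Phi\colon G\to H,\qquad \Phi(a y_1^s)=\chi(a)\,y_2^{ks}.$$
To check that $\Phi$ is a homomorphism I would expand $\Phi(ay_1^s\cdot by_1^t)=\chi(a)\,\chi(\varphi_1^s(b))\,y_2^{k(s+t)}$ and $\Phi(ay_1^s)\Phi(by_1^t)=\chi(a)\,\varphi_2^{ks}(\chi(b))\,y_2^{k(s+t)}$ using the product law, so that equality reduces to $\chi\varphi_1^s\chi^{-1}=\varphi_2^{ks}=(\varphi_2^k)^s$, which is exactly the $s$-th power of relation (iii). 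Bijectivity is then immediate: $\chi$ is a bijection $A_1\to A_2$, the map $s\mapsto ks\bmod m$ permutes $\Int_m$ because $\gcd(k,m)=1$, and $\abs{G}=\abs{A_1}m=\abs{A_2}m=\abs{H}$ by (i) and (ii). Hence $\Phi$ is an isomorphism.

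For the ``only if'' direction I would start from an isomorphism $\Psi\colon G\to H$ and extract the three conditions in turn. First, $\Psi$ maps $\Dd^m_G$ bijectively onto $\Dd^m_H$ for every $m$ (it carries a decomposition to a decomposition of the same type), so $\gamma$ is an isomorphism invariant and $\abs{y_1}=\gamma(G)=\gamma(H)=\abs{y_2}=m$, which is (ii). Next I would show that the abelian part is \emph{canonical}: letting $\pi$ be the set of primes dividing $\abs{H}/m$, the quotient $H/A_2\cong\Int_m$ is a $\pi'$-group, so every normal $\pi$-subgroup of $H$ lies in $A_2$, i.e.\ $A_2=O_\pi(H)$; since $\Psi(A_1)$ is also a normal $\pi$-subgroup of order $\abs{A_1}=\abs{H}/m=\abs{A_2}$, it follows that $\Psi(A_1)=A_2$, giving (i). Writing $\sigma=\Psi|_{A_1}\colon A_1\to A_2$ and $y_1'=\Psi(y_1)$, the relation $\Psi(y_1ay_1^{-1})=y_1'\Psi(a)(y_1')^{-1}$ shows that the conjugation action $\varphi_1'$ of $y_1'$ on $A_2$ satisfies $\varphi_1'=\sigma\varphi_1\sigma^{-1}$. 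Finally I would relate $y_1'$ to $y_2$: both $\gen{y_1'}$ and $\gen{y_2}$ are complements of the normal subgroup $A_2$ in $H$, and since $\gcd(\abs{A_2},m)=1$ with $A_2$ abelian, the Schur--Zassenhaus theorem provides $g\in H$ with $g\gen{y_1'}g^{-1}=\gen{y_2}$, hence $gy_1'g^{-1}=y_2^k$ for some $k\in\Int_m^\ast$ (a generator maps to a generator). Writing $g=a_0y_2^j$ and using that $A_2$ is abelian and normal, conjugation by $g$ acts on $A_2$ as $\varphi_2^j$, so conjugating $\varphi_1'$ by $g$ yields $\varphi_2^k=\varphi_2^j\varphi_1'\varphi_2^{-j}$, whence $\varphi_1'=\varphi_2^k$ (powers of $\varphi_2$ commute). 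Combining with $\varphi_1'=\sigma\varphi_1\sigma^{-1}$ gives $\varphi_1=\sigma^{-1}\varphi_2^k\sigma$ with $k$ coprime to $m$, which is (iii) with $\chi=\sigma$.

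I expect the two structural facts behind the ``only if'' direction to be the main obstacles. The first is the canonicity of the abelian part $A_2=O_\pi(H)$, which is what forces $\Psi(A_1)=A_2$ and hence (i); the second is the conjugacy of the cyclic complements, for which the solvability of $H$ (it is metabelian) is exactly what makes Schur--Zassenhaus applicable and is what guarantees that the resulting $k$ is coprime to $m$. Keeping track of this coprimality is the delicate point of the whole argument, since it is precisely the coprimality of $k$ that separates a true isomorphism from a mere equality of some power of the two actions.
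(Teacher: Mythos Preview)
The paper does not prove this proposition; it is quoted verbatim from \cite{LeGallSTACS09} and used as a black box. The only fragment visible here is the explicit isomorphism $\mu(xy_1^{j})=\chi(x)y_2^{kj}$ mentioned at the end of the proof of Theorem~\ref{theorem_main}, which is exactly your map $\Phi$ for the ``if'' direction. So there is nothing in this paper to compare your argument against, and your construction for that direction agrees with what little is shown.

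Your proof is correct. The ``only if'' direction is handled cleanly: the identification $A_2=O_\pi(H)$ forces $\Psi(A_1)=A_2$, and the Schur--Zassenhaus conjugacy of complements (valid because $A_2$ is abelian) produces $k\in\Int_m^\ast$ together with the intertwining relation. The computation $\varphi_2^k=\varphi_2^{j}\varphi_1'\varphi_2^{-j}$, hence $\varphi_1'=\varphi_2^k$, is the right way to pass from conjugacy of the cyclic complements to the statement about automorphisms.

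Your insistence that $k$ be taken coprime to $m$ is well placed and worth emphasising. As you note, the ``if'' direction is simply false for arbitrary $k$ (your order-$4$ example works, and one can make it concrete with $A=\Int_5$, $m=4$, $\varphi_2$ multiplication by $2$: both decompositions are genuinely standard, yet the two semidirect products have centers of different orders). The paper silently adopts the same convention: the algorithm for \textsc{Discrete Log up to Conjugacy} searches for $k$ only in $\Int_m^\ast$, and Lemma~\ref{claim1} is stated for $k\in\Int_m^\ast$. So your reading of condition~(iii) matches how the proposition is actually used.
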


From now, we identify, for any prime $p$, the finite field of size $p$ with $\Int_p$.
The following proposition summarizes key elements used in the classical algorithm by Le Gall \cite{LeGallSTACS09} that we will need.

\begin{proposition}[\cite{LeGallSTACS09}]\label{theorem_red}
Let $A_1$ and $A_2$ be two isomorphic abelian groups.
Let $(g_1,\ldots,g_s)$ and $(h_1,\ldots,h_s)$ be bases of $A_1$ and $A_2$ respectively. 
Suppose that $A_1\cong (\Int_{p_1^{f_1}})^{r_1}\times \cdots\times(\Int_{p_t^{f_t}})^{r_t}$, where each $r_i$ is a positive integer, and 
each $p_i$ is a prime but $p_i^{f_i}\neq p_j^{f_j}$ for $i\neq j$. Denote $\mathsf{V}=GL(r_1,\Int_{p_1})\times\cdots\times GL(r_t,\Int_{p_t})$.
Then there exists two homomorphisms $\Phi_1\colon Aut(A_1)\to \mathsf{V}$ and $\Phi_2\colon Aut(A_2)\to \mathsf{V}$ such that, for any two automorphisms $\zeta_1\in Aut(A_1)$ and $\zeta_2\in Aut(A_2)$ of order coprime with $\abs{A_1}$,  the following two assertions are equivalent:
\begin{itemize}
\item[(i)]
there exists an  isomorphism $\chi\colon A_1\to A_2$ such that $\zeta_1=\chi^{-1}\zeta_2\chi$;
\item[(ii)] 
there exists an element  $X\in \mathsf{V}$ such that $\Phi_1(\zeta_1)=X^{-1}\Phi_i(\zeta_2)X$. 
\end{itemize}
Moreover, if, for each $j\bound{1}{s}$, integers $u_{ij}$ and $v_{ij}$ such that 
$\zeta_1(g_j)=g_1^{u_{1j}}\cdots g_s^{u_{sj}}$ and $\zeta_2(h_j)=h_1^{v_{1j}}\cdots h_s^{v_{sj}}$ are known,
then the following holds:
\begin{itemize}
\item[(a)]
the images $\Phi_1(\zeta_1)$ and $\Phi_2(\zeta_2)$ can be computed (classically) in time polynomial in $\log\abs{A_1}$;
\item[(b)]
given an explicit element  $X\in \mathsf{V}$ such that $\Phi_1(\zeta_1)=X^{-1}\Phi_i(\zeta_2)X$, 
an isomorphism $\chi:A_1\to A_2$ such that $\zeta_1=\chi^{-1}\zeta_2\chi$ can be computed (classically) in time polynomial in $\log\abs{A_1}$.
\end{itemize}
\end{proposition}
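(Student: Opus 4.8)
The plan is to build $\Phi_1$ and $\Phi_2$ out of the canonical ``homogeneous layers'' of $A_1$ and $A_2$, and then to settle the equivalence (i)$\Leftrightarrow$(ii) by a Schur--Zassenhaus argument in which the hypothesis that $\zeta_1,\zeta_2$ have order coprime with $\abs{A_1}$ is the decisive ingredient. For a prime $p$ and a prime power $q=p^f$ I would set $L_q(A_1)=A_{1,p}[q]\,\big/\,\big(A_{1,p}[q/p]+p\,A_{1,p}[pq]\big)$, where $A_{1,p}$ is the Sylow $p$-subgroup of $A_1$ and $A_{1,p}[n]$ its $n$-torsion; every subgroup here is characteristic, so $Aut(A_1)$ acts on $L_q(A_1)$, and a short computation on a prime-power basis shows $L_q(A_1)\cong(\Int_p)^{c}$ with $c$ the number of cyclic factors of $A_1$ of order exactly $q$, a basis being given by the images of the basis elements of order $q$. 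By uniqueness of the invariant factors, exactly $r_i$ of the given generators $g_1,\dots,g_s$ have order $p_i^{f_i}$; reading on $L_{p_i^{f_i}}(A_1)$ the action of $\zeta\in Aut(A_1)$ in the basis of their images defines $\Phi_1\colon Aut(A_1)\to\mathsf{V}$, and likewise $\Phi_2$ from $(h_1,\dots,h_s)$. Assertion (a) is then immediate: from the integers $u_{ij}$, reducing modulo $p_i$ the coefficients carried by the order-$p_i^{f_i}$ generators yields $\Phi_1(\zeta_1)$ in $\poly(\log\abs{A_1})$ time.

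For (i)$\Rightarrow$(ii), an isomorphism $\chi\colon A_1\to A_2$ with $\zeta_1=\chi^{-1}\zeta_2\chi$ carries each $A_{1,p}$ onto $A_{2,p}$ and hence induces isomorphisms $L_q(A_1)\xrightarrow{\sim}L_q(A_2)$ intertwining the induced maps of $\zeta_1$ and $\zeta_2$; expressing these layer isomorphisms in the two chosen bases produces the required $X\in\mathsf{V}$ with $\Phi_1(\zeta_1)=X^{-1}\Phi_2(\zeta_2)X$.

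The real work is (ii)$\Rightarrow$(i). Here I would use as structural input the classical order formula $\abs{Aut(A_{1,p})}=p^{N_p}\prod_f\abs{GL(c_{p,f},\Int_p)}$ (Hillar--Rhea), from which the homomorphism $Aut(A_{1,p})\to\prod_f GL(c_{p,f},\Int_p)$ through which $\Phi_1$ factors is surjective with kernel a normal $p$-subgroup. Consequently $\ker\Phi_1$ and $\ker\Phi_2$ are normal nilpotent subgroups whose orders involve only primes dividing $\abs{A_1}$, $\Phi_1$ and $\Phi_2$ are onto, and every $\zeta\in Aut(A_1)$ of order coprime with $\abs{A_1}$ satisfies $\abs{\zeta}=\abs{\Phi_1(\zeta)}$; in particular (ii) already forces $\abs{\zeta_1}=\abs{\zeta_2}$. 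I would then fix any isomorphism $\psi_0\colon A_1\to A_2$ matching $(g_j)$ to $(h_j)$ order-wise, lift a conjugator $X$ to some $\tilde X\in Aut(A_2)$, and set $\eta=\psi_0\zeta_1\psi_0^{-1}$ and $\eta'=\tilde X^{-1}\zeta_2\tilde X$. These have the same $\Phi_2$-image, hence lie in one coset of $K:=\ker\Phi_2$; both have order coprime with $\abs K$, and $K$ is normal and solvable in $K\gen{\eta}$, so Schur--Zassenhaus yields $k\in K$ with $k\gen{\eta}k^{-1}=\gen{\eta'}$; since $\eta'$ and $k\eta k^{-1}$ then lie in the same $K$-coset inside the complement $\gen{\eta'}$, they coincide, i.e.\ $\eta'=k\eta k^{-1}$, and $\chi:=\tilde X\,k\,\psi_0\colon A_1\to A_2$ is the desired isomorphism. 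Assertion (b) follows by making this recipe effective: $\psi_0$ and $\tilde X$ are written down directly, and producing the Schur--Zassenhaus element $k$ reduces, layer by layer in the nilpotent group $K\subseteq\prod_p Aut(A_{1,p})$, to solving linear systems over the rings $\Int_{p^f}$ with $p^f\mid\abs{A_1}$, hence runs in $\poly(\log\abs{A_1})$ time.

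I expect the step (ii)$\Rightarrow$(i), together with the effective version of it needed for (b), to be the main obstacle: one is handed only mod-$p$ data on each homogeneous layer and must promote it to a genuine isomorphism of the full group. What makes this go through is precisely the presentation of $Aut(A_p)$ as an extension of $\prod_f GL(c_{p,f},\Int_p)$ by a normal $p$-subgroup, which is exactly what lets the coprimality hypothesis power the Schur--Zassenhaus lifting.
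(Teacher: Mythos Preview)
The paper does not prove this proposition: it is imported verbatim from \cite{LeGallSTACS09} (as the attribution in the statement indicates), and no argument for it appears anywhere in the present text. There is therefore nothing here to compare your proof against.

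On its own merits your outline is correct and is in the spirit of the classical description of $Aut(A_p)$ going back to Ranum~\cite{Ranum07}, which the paper also cites. The layer construction $L_q(A_1)$ is exactly the right characteristic subquotient, your surjectivity and $p$-power--kernel claims for $Aut(A_{1,p})\to\prod_f GL(c_{p,f},\Int_p)$ follow from the order formula you quote, and the Schur--Zassenhaus step is clean: once $\eta,\eta'$ lie in the same $K$-coset with $\gcd(\abs{\eta},\abs{K})=1$, conjugacy of complements gives $k\in K$ with $k\gen{\eta}k^{-1}=\gen{\eta'}$, and your observation that $k\eta k^{-1}$ and $\eta'$ then lie in the same $K$-coset inside the complement $\gen{\eta'}$, hence coincide, is the right way to finish. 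The one place to tighten is the running-time claim in (b): ``layer by layer\ldots linear systems over $\Int_{p^f}$'' is the correct strategy, but you should say explicitly that elements of $Aut(A_2)$ are represented as $s\times s$ matrices with $s=O(\log\abs{A_1})$, that the nilpotency class of each $p$-part of $K$ is $O((\log\abs{A_1})^2)$, and that on each abelian layer the equation $k\cdot(\eta k\eta^{-1})^{-1}=\eta'\eta^{-1}$ becomes an $\Int_p$-linear system of polynomial size; otherwise the polynomial bound is asserted rather than shown.
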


We now present our proof of Theorem \ref{theorem_main}.
\begin{proof}[Proof of Theorem \ref{theorem_main}]
Suppose that $G$ and $H$ are two groups in the class $\Sp$.
In order to test whether these two groups are isomorphic, we first run
Procedure $\proc{Decompose}$ on $G$ and $H$ and obtain outputs $(U_1,y_1)$ 
and $(U_2,y_2)$ such that $(\gen{U_1},\gen{y_1})$ and 
$(\gen{U_2},\gen{y_2})$ are standard decompositions of $G$ and $H$ 
respectively with high probability (from Theorem \ref{th:correctness}).
The running time of this step is polynomial in the logarithms of $\abs{G}$ and $\abs{H}$, from Theorem~\ref{th:correctness}.
Denote $A_1=\gen{U_1}$ and $A_2=\gen{U_2}$. The orders of $A_1,A_2$, $y_1$ and $y_2$ are then computed using the quantum algorithms
for Tasks (i) and (ii) in Theorem \ref{tasks}.
Notice that $\abs{G}=\abs{A_1}\cdot\abs{y_1}$ and $\abs{H}=\abs{A_2}\cdot\abs{y_2}$.  
If $\abs{G}\neq \abs{H}$, we conclude that $G$ and $H$ are not isomorphic. In the following, we suppose that 
$\abs{G}=\abs{H}$ and denote by $n$ this order.

If $\abs{y_1}\neq\abs{y_2}$ we conclude
that $G$ and $H$ are not isomorphic,  from Proposition \ref{proposition_class}. Otherwise denote
 $\abs{y_1}=\abs{y_2}=m$.
Then we compute a basis $(g_1,\ldots,g_{s})$ of $A_1$ and a basis $(h_1,\ldots,h_{s'})$ of $A_2$ 
using the quantum algorithm for Task (ii) in Theorem \ref{tasks}. 
Given these bases it is easy to check the isomorphism of $A_1$ and $A_2$:
the groups $A_1$ and $A_2$ are isomorphic if and only if $s=s'$ and there exists a permutation $\sigma$ of $\{1,\ldots,s\}$ such that $\abs{g_i}=\abs{h_{\sigma(i)}}$
for each $i\bound{1}{s}$.
If $A_1\not\cong A_2$ we conclude that $G$ and $H$ are not 
isomorphic, from Proposition \ref{proposition_class}. 

Now suppose that $A_1\cong A_2\cong (\Int_{p_1^{f_1}})^{r_1}\times \cdots\times(\Int_{p_t^{f_t}})^{r_t}$, where
each $p_i$ is a prime, but $p_i^{f_i}\neq p_j^{f_j}$ for $i\neq j$.
We want to decide whether the action by conjugation $\varphi_1\in Aut(A_1)$ of $y_1$ on $A_1$ 
and the action by conjugation $\varphi_2\in Aut(A_2)$ of $y_2$ on $A_2$ 
satisfy Condition (iii) in Proposition \ref{proposition_class}.
Notice that, for each $j\bound{1}{s}$, we can compute (in time polynomial in $\log n$) integers $u_{ij}$ and $v_{ij}$ such that $\varphi_1(g_j)=y_1g_jy_1^{-1}=g_1^{u_{1j}}\cdots g_s^{u_{sj}}$
and $\varphi_2(h_j)=y_2h_jy_2^{-1}=h_1^{v_{1j}}\cdots h_s^{v_{sj}}$ using the quantum algorithm for Task (iii) in Theorem \ref{tasks}. 
From Proposition \ref{theorem_red}, the images $\Phi_1(\varphi_1)$ and $\Phi_2(\varphi_2)$  can then be computed in time polynomial in $\log n$.
Notice that $[\Phi_1(\varphi_1)]^m=[\Phi_2(\varphi_2)]^m=I$.
 
Since the maps $\Phi_2$ is a homomorphism, Proposition \ref{theorem_red} implies that 
there exists a positive integer $k$ and an isomorphism $\chi:A_1\to A_2$
such that $\varphi_1=\chi^{-1}\varphi_2^k\chi$
if and only if  $\Phi_1(\varphi_1)$ and $[\Phi_2(\varphi_2)]^k$ 
are conjugate in the group $\mathsf{V}=GL(r_1,\Int_{p_1})\times\cdots\times GL(r_t,\Int_{p_t})$.
If we denote $\Phi_1(\varphi_1)=(M_1^{(1)},\ldots,M_1^{(t)})$ and $\Phi_2(\varphi_2)=(M_2^{(1)},\ldots,M_2^{(t)})$, where each $M_1^{(\ell)}$ and each
$M_2^{(\ell)}$ are matrices in $GL(r_\ell,\Int_{p_\ell})$, then
checking if the later condition holds becomes an instance of the problem $\proc{Discrete Log up to Conjugacy}$, 
and can be decided using the algorithm of Theorem \ref{theorem:conjugacy} in time polynomial in $t$, $\log m$, and  $\max_{1\le \ell\le t}(r_\ell+\log p_\ell)$,
i.e., in time polynomial in $\log n$.

If the above instance of  $\proc{Discrete Log up to Conjugacy}$ has no solution, we conclude that $G$ and
$H$ are not isomorphic. Otherwise we take one value $k$ such that each $\Phi_1(\varphi_1)$ and $[\Phi_2(\varphi_2)]^k$ are conjugate, along with an element 
$X\in \mathsf{V}$ such that $X\Phi_1(\varphi_1)=[\Phi_2(\varphi_2)]^kX$ (such an element is obtained from the output of 
the algorithm of Theorem \ref{theorem:conjugacy}), 
and compute an isomorphism $\chi$ from $A_1$ to $A_2$ such that $\varphi_1=\chi^{-1}\varphi_2^k\chi$ using the last part of Proposition \ref{theorem_red}.
The map $\mu: G\to H$ defined as $\mu(xy_1^{j})=\chi(x)y_2^{kj}$ for any $x\in A_1$ and any $j\bound{0}{m-1}$ is 
then an isomorphism from $G$ to $H$ (a detailed proof of this statement can be found in the proof of Proposition \ref{proposition_class} included in~\cite{LeGallSTACS09}).
\end{proof}

\section*{Acknowledgments}
The author is indebted to Yoshifumi Inui for many discussions on similar topics.
He also thanks  Erich Kaltofen, Igor Shparlinski and Yuichi Yoshida for helpful comments.

\end{document}